\newtheorem{theorem}{{\bf Theorem}}
\newtheorem{lemma}{{\bf Lemma}}
\newtheorem{corollary}{{\bf Corollary}}
\newtheorem{example}{{\bf Example}}
\newtheorem{definition}{{\bf Definition}}
\title{Committees providing EJR can be computed efficiently}
\author{Luis S\'anchez-Fern\'andez\\Universidad Carlos III de Madrid
  \and Edith Elkind\\University of Oxford \and Martin
  Lackner\\University of Oxford}
\begin{document}

\maketitle

\begin{abstract}

We identify a whole family of approval-based multi-winner voting rules
that satisfy PJR. Moreover, we identify a subfamily of voting rules
within this family that satisfy EJR. All these voting rules can be
computed in polynomial time as long as the subalgorithms that
characterize each rule within the family are polynomial. One of the
voting rules that satisfy EJR can be computed in $O(n m k)$.

\end{abstract}

\section{Introduction}

The use of axioms in social choice theory dates back to the works of
Arrow~\cite{introSCW}. When the goal of the election is to select two
or more candidates (i.e. a committee), one importat type of axioms are
those that try to identify which requirements must comply a {\it
  representative} set of winners.

In the context of approval-based multi-winner elections this topic has
been adressed by Aziz {\it et al.} in~\cite{aziz:aaai15} (the journal
paper version is~\cite{aziz:scw}). They proposed two axioms to capture
the idea of representation: justified representation (JR) and extended
justified representation (EJR). JR establishes requirements on when a
set of voters deserves a representative while EJR establishes
requirements on when a set of voters deserves several
representatives. Aziz {\it et al.} analysed with these axioms several
well known multi-winner voting rules. The only voting rule that they
found that satisfies EJR is the Proportional Approval Voting
(PAV). Unfortunately, Aziz {\it et al.}~\cite{AGG+14a} and Skowron
{\it et al.}~\cite{skowron2016finding} have proved that computing PAV
is NP-complete. No voting rule computable in polynomial time that
satisfies EJR has been found so far.

S\'anchez-Fern\'andez {\it et al.} proposed in~\cite{pjr-aaai} a
relaxation of EJR that they called proportional justified
representation (PJR). They show that the Greedy Monroe rule (which can
be computed in polynomial time) satisfies PJR if the target committee
size divides the total number of votes. Very soon, Brill {\it et
  al.}~\cite{brill:phragmen} and S\'anchez-Fern\'andez {\it et
  al.}~\cite{2016arXiv160905370S} identified two voting rules that
satisfy PJR in all cases and can be computed in polynomial time.

In this paper we identify a whole family of approval-based
multi-winner voting rules that satisfy PJR. Moreover, we identify a
subfamily of voting rules within this family that satisfy EJR. All
these voting rules can be computed in polynomial time as long as the
subalgorithms that characterize each rule within the family are
polynomial.

\section{Notation and preliminaries}


We consider an approval-based multi-winner election with a set of
voters $N = \{1, \dots, n\}$ and a set of candidates $C= \{c_1, \dots,
c_m\}$. Each voter $i \in N$ submits an approval ballot $A_i \subseteq
C$, which represents the subset of candidates that she approves of. We
refer to the list $\mathcal{A}= (A_1, \dots, A_n)$ of approval ballots
as the {\it ballot profile}. We will consider approval-based
multi-winner voting rules that take as input $(N, C, \mathcal{A}, k)$,
where $k$ is a positive integer that satisfies $k \leq |C|$, and
return a subset $W \subseteq C$ of size $k$, which we call the {\it
  winning set}. We omit $N$ and $C$ from the notation when they are
clear from the context. We say that the {\it exact quota} $q$ is equal
to $n/k$.

For each candidate $c \in C$ we refer to $N_c$ as the set of voters
that approve $c$ ($N_c= \{i \in N: c \in A_i\}$) and to $n_c$ as the
number of voters that approve $c$ ($n_c= |N_c|$).

The maximum of a finite and non-empty set $A$ of real numbers
(represented as $\max A$) is the element of $A$ strictly greater than
all other elements of $A$ (we assume that sets cannot contain
duplicates). By convention we say that $\max \emptyset= 0$.

\begin{definition}

{\bf Extended/proportional justified representation}
Consider a ballot profile $\mathcal{A}= (A_1, \dots, A_n)$ over a
candidate set $C$ and a target committee size $k$, $k \leq |C|$.
Given a positive integer $\ell\in \{1, \ldots, k\}$, we say that a set
of voters $N^*\subseteq N$ is {\em $\ell$-cohesive} if $|N^*| \geq
\ell \frac{n}{k}$ and $|\bigcap_{i \in N^*} A_i| \geq \ell$.  we say
that a set of candidates $W$, $|W| = k$, {\em provides extended
  justified representation (EJR) (respectively, proportional justified
  representation (PJR))} for $(\mathcal{A}, k)$ if for every
$\ell\in\{1, \ldots, k\}$ and every $\ell$-cohesive set of voters $N^*
\subseteq N$ it holds that exists a voter $i$ in $N^*$ such that $|A_i
\cap W| \geq \ell$ (respectively, $|W \cap (\bigcup_{i \in N^*} A_i)|
\ge \ell$). We say that an approval-based voting rule {\em satisfies
  extended justified representation (EJR) (respectively, proportional
  justified representation (PJR))} if for every ballot profile
$\mathcal{A}$ and every target committee size $k$ it outputs a
committee that provides EJR (respectively, PJR) for $(\mathcal{A},
k)$.
\end{definition}

\section{Intuition}

Aziz {\it et al.}~\cite{aziz:scw} proposed a rule, that they called
HareAV, that can be seen as an extension of the largest remainders
apportionment method (with Hare quota) to approval-based multi-winner
elections. HareAV is an iterative method in which at each step the
most approved candidate that has not yet been elected is added to the
set of winners, and at most $\lceil \frac{n}{k} \rceil$ of the votes
that approve the elected candidate are also removed from the election
(when the number of votes that approve the candidate is less than or
equal to $\lceil \frac{n}{k} \rceil$ all such votes are removed from
the election). Which particular votes are removed is left open.

Aziz {\it et al.} dicuss in~\cite{aziz:scw} whether HareAV satisfies
EJR or not. They show an example in which HareAV fails EJR for some
way of breaking intermediate ties. However, they were unable to
construct an example where HareAV fails EJR for all ways of breaking
intermediate ties. Based on this they say: ``we now conjecture that it
is always possible to break intermediate ties in HareAV so as to
satisfy EJR''. Unfortunately, there are elections for which HareAV
fails EJR regardless of the tie-breaking rule, as the following
example, taken from S\'anchez-Fern\'andez {\it et al.}~\cite{pjr-aaai}
shows:

\begin{example}\label{ex:monroe}
Let $n=10$, $k=7$, $C=\{c_1, \dots, c_8\}$.
Suppose that $A_i=\{c_i\}$ for $i=1, \dots, 4$ and
$A_i=\{c_5, c_6, c_7, c_8\}$ for $i=5, \dots, 10$.
Let $\ell = 4$. Then $\ell\cdot\frac{n}{k} =\frac{40}{7} < 6$,
so the set of voters $\{5, 6, 7, 8, 9, 10\}$ form a 4-cohesive group of voters.
However, under HareAV 
only three candidates from 
$\{c_5, c_6, c_7, c_8\}$
will be selected, regardless of the tie-breaking rule used. 
\end{example}

Despite this negative result, the underlying idea of removing part of
the votes in successive iterations offers great flexibility: many
different ways of choosing which votes are removed at each iteration
can be conceived. Based on this, we decided to explore the possibility
of tweaking HareAV in pursuit of a rule that satisfies EJR and can be
computed in polynomial time.

\subsection{First ideas}

The set of winners selected by HareAV in the above example fails to
provide EJR because at each iteration too many votes are removed. In
this example $\lceil \frac{n}{k} \rceil= \lceil \frac{10}{7} \rceil=
2$. Therefore, after 3 iterations, there are $6 - 3 * 2= 0$ votes left
that approve $\{c_5, c_6, c_7, c_8\}$. If instead we remove exactly
$\frac{10}{7}$ votes at each iteration, after 3 iterations there would
be $6 - 3 * \frac{10}{7} = 12/7$ votes left that approve $\{c_5, c_6,
c_7, c_8\}$. This is more than the number of votes received by any of
the other candidates. Therefore, if we remove $\frac{10}{7}$ at each
iteration, all the candidates in $\{c_5, c_6, c_7, c_8\}$ would be
elected. The problem, of course, is that $\frac{10}{7}$ is not an
integer. To solve this we will allow to remove a {\it fraction} of one
vote. For each voter $i$ we define $f_i^j$ as the fraction of the vote
of voter $i$ that has not been removed from the election after $j$
candidates have been added to the set of winners.  For each voter $i$
we define $f_i^0= 1$. At all time, it must be $0 \leq f_i^j \leq 1$.

One advantage of removing fractions of votes instead of whole votes is
that it allows to treat all the votes that approve the elected
candidate evenly: we do not need to ask ourselves which votes must be
removed and which ones must not. We can, for instance, scale down all
the fractions of the votes that approve the elected candidate by the
same factor, so the remaining number of votes is reduced by $q=
\frac{n}{k}$ units. In fact, this rule is not new. According to Svante
Janson~\cite{2016arXiv161108826J}, it was proposed by the Swedish
mathematicians Lars Edvard Phragm\'en and/or Gustaf Enestr\"om in the
19th century (it seems that the original authorship is not absolutely
clear). We will refer to this rule as {\it
  phragm\'en-STV}. phragm\'en-STV is formally defined in
algorithm~\ref{alg:oh-pjr}.

\begin{algorithm}[htb]
\caption{phragm\'en-STV\label{alg:oh-pjr}}
{\bf Input}: an approval-based multi-winner election $(N, C, \mathcal{A}, k)$\\
{\bf Output}: the set of winners $W$
\algblockdefx[ForEach]{ForEach}{EndForEach}[1]
{\textbf{foreach} #1 \textbf{do}}{\textbf{end foreach}} 
\begin{algorithmic}[1]
\State $q\gets \frac{n}{k}$
\State $W\gets \emptyset$

\For{$j=1$ {\bf to} $k$}
  \ForEach{$c \in C \setminus W$}
    \State $s_c^j\gets \sum_{i: c \in A_i} f_i^{j-1}$
  \EndForEach
  \State $w\gets 
    \underset{c \in C \setminus W}{\textrm{argmax}} \ s_c^j$
  \State $W\gets W \cup \{w\}$
  \ForEach{$i \in N \setminus N_w$}
    \State $f_i^j\gets f_i^{j-1}$
  \EndForEach    
  \ForEach{$i \in N_w$}
    \If{$s_w^j \leq q$}
      \State $f_i^j\gets 0$
    \Else
      \State $f_i^j\gets f_i^{j-1} \frac{s_w^j - q}{s_w^j}$
    \EndIf
  \EndForEach    
\EndFor
\State \textbf{return} $W$
\end{algorithmic}
\end{algorithm}

Unfortunately, this rule does not satisfy EJR. Here is one example in
which this rule fails EJR:

\begin{example}
\label{ex2}
Let $n/k=120$, $k=18$, $C=\{c_1, \dots, c_{21}\}$. Voters cast the
following votes:

\begin{itemize}

\item

120 voters approve $\{c_1, c_2, c_5\}$

\item

120 voters approve $\{c_1, c_2, c_6\}$

\item

122 voters approve $\{c_5, c_7\}$

\item

70 voters approve $\{c_3, c_4, c_6\}$

\item

50 voters approve $\{c_3, c_4\}$

\item

120 voters approve $\{c_3, c_4, c_8\}$

\item

121 voters approve $\{c_8, c_9\}$

\item

52 voters approve $\{c_7\}$

\item

65 voters approve $\{c_9\}$

\item

For $i=1, \dots, 12$, 110 voters approve $\{c_{9+i}\}$.

\end{itemize}

For this example the set of winners selected by phragm\'en-STV is one
of $c_3$ or $c_4$ plus $\{c_5, ..., c_{21}\}$. However, the 240 voters
that approve $\{c_1, c_2, c_5\}$ or $\{c_1, c_2, c_6\}$ form a
2-cohesive group of voters but none of these voters approves 2 of the
winners.
\end{example}

In the next section we will examine this example in detail to
understand how to tweak further HareAV to make it satisfy
EJR. Observe, however, that it is very easy to prove that
phragm\'en-STV satisfies PJR. Suppose that $N^*$ is an $\ell$-cohesive
group of voters (that is, $|N^*| \geq \ell \frac{n}{k}$ and $|\cap_{i
  \in N^*} A_i| \geq \ell$). At each iteration in which a candidate
that is approved by some voters in $N^*$ is elected at most
$q= \frac{n}{k}$ votes from $N^*$ are removed. So, after $\ell - 1$
candidates that are approved by some voters in $N^*$ have been
selected, at least $\ell \frac{n}{k} - (\ell -1) \frac{n}{k}=
\frac{n}{k}$ votes from $N^*$ remain in the election, and there exists
at least one candidate that is approved by all such votes. But it is
impossible that after all the $k$ candidates have been selected a
group of $\frac{n}{k}$ votes that approve a common candidate still
remains in the election, so we know that at least one more candidate
approved by some voter in $N^*$ must be elected.

In fact, as long as we remove at most $q= \frac{n}{k}$ votes at each
iteration (and when the number of remaining votes that approve the
selected candidate is less than or equal to $q$ all such votes are
removed from the election), and we select always a candiate that is
approved by at least $q$ of the votes remaining when the candidate is
selected whenever such candidate exists, the rule will satisfy
PJR. This motivates the following definition.

\begin{definition}

We say that a voting rule belongs to the PJR-Exact family (this family
of voting rules is characterized by the use of the exact quota as the
(maximum) amount of votes removed from the election at each iteration)
when the following conditions hold:

\begin{enumerate}

\item

The voting rule consists of an iterative algorithm in which at each
iteration one candidate is added to the set of winners $W$.

\item

At each iteration fractions of the votes are removed from the
election. For each voter $i$ we define $f_i^j$ as the fraction of the
vote of voter $i$ that has not been removed from the election after
$j$ candidates have been added to the set of winners.  For each voter
$i$ we define $f_i^0= 1$. At all time, it must be $0 \leq f_i^j \leq
1$, for all $i, j$. For each voter $i$ and for each $j=0, \dots, k-1$,
it must be $f_i^{j+1} \leq f_i^j$.

If at certain iteration $j$ the number of remaining (fractions of the)
votes that approve the selected candidate $c$ is greater than $q=
\frac{n}{k}$, then only $q$ of such votes are removed from the
election (that is, $\sum_{i: c \in A_i} (f^j_i - f^{j-1}_i)= q$). If
at certain iteration $j$ the number of remaining (fractions of the)
votes that approve the selected candidate $c$ is less than or equal to
$q$, then all such votes are removed from the election (that is, for
all $i$ such that $c \in A_i$, it is $f_i^j=0$).

For each voter $i$ that does not approve the candidate $c$ selected at
iteration $j$, it is $f_i^j= f_i^{j-1}$.

\item

At each iteration $j$, if exists at least one not yet elected
candidate $c$ such that it is approved by at least $q$ (fractions of
the) remaining votes, no candidate that is approved by less than $q$
votes can be selected.

\end{enumerate}

\end{definition}

\begin{theorem}
All the voting rules that belong to the PJR-Exact family satisfy PJR.
\end{theorem}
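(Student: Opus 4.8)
The plan is to generalize the PJR argument the authors already gave for phragmén-STV, since the defining properties of the PJR-Exact family are precisely the abstractions of what made that argument work. The key observation is that PJR only requires that \emph{some} candidate approved by a voter in the cohesive set gets elected in sufficient quantity; we never need to track which particular votes are removed, only how much weight is removed. So I would argue purely in terms of the total remaining weight of the cohesive group.

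First I would fix an $\ell$-cohesive group $N^*$, so $|N^*| \geq \ell \frac{n}{k} = \ell q$ and $|\bigcap_{i \in N^*} A_i| \geq \ell$. Let $W$ be the committee returned by a rule in the family. I would define $t = |W \cap (\bigcup_{i \in N^*} A_i)|$, the number of elected candidates approved by at least one voter in $N^*$, and aim to show $t \geq \ell$, which is exactly the PJR condition. The central quantity to track is the total remaining weight of the group, $F^j = \sum_{i \in N^*} f_i^j$, which starts at $F^0 = |N^*| \geq \ell q$. By property~2 of the family, a candidate's selection removes weight from voters in $N^*$ only if that candidate is approved by some voter in $N^*$, and each such iteration removes \emph{at most} $q$ units of weight (either exactly $q$, or all remaining weight on that candidate when it is $\le q$). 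Hence after all $k$ iterations, at most $t$ of them touched $N^*$, so the total weight removed from $N^*$ is at most $t q$, giving $F^k \geq F^0 - tq \geq \ell q - tq = (\ell - t)q$.

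The finishing step is to show $F^k$ must be small. Suppose for contradiction that $t \leq \ell - 1$, so $F^k \geq q > 0$; I want to derive a contradiction with property~3. The idea is that if positive weight survives in $N^*$, then since all voters in $N^*$ share at least $\ell \geq t+1 > t$ common candidates, at least one of those commonly-approved candidates was never elected (because only $t$ candidates approved by the group were chosen, and $|\bigcap_{i \in N^*} A_i| \geq \ell > t$). That un-elected common candidate is approved by every voter in $N^*$, and therefore carries remaining weight at least $F^k \geq q$ at the final iteration. But property~3 forbids the algorithm from stopping while some un-elected candidate still has at least $q$ remaining approving weight, as long as committee slots remain; since all $k$ slots are used, this candidate should have been selected at some iteration, contradicting that it was never elected. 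I would phrase this carefully: at the last iteration at which the group's weight was untouched, a common un-elected candidate still had weight $\geq q$ available, so by property~3 the algorithm was obliged to keep electing group-approved candidates until either $\ell$ of them were chosen or the weight dropped below $q$.

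The main obstacle I anticipate is making the last paragraph fully rigorous, because the weight on the surviving common candidate must be shown to have stayed $\geq q$ throughout, and I must rule out the degenerate possibility that weight on common candidates was siphoned off by selecting \emph{other} group-approved candidates in a way that leaves no single common candidate above $q$. The clean fix is to note that weight removal from $N^*$ is bounded by $q$ per relevant iteration and that the common candidates all retain the \emph{same} per-voter fractions $f_i^j$ restricted to $N^*$ (since removal depends only on approval of the selected candidate); so the remaining weight on \emph{each} commonly-approved candidate is at least $F^j$ restricted to $N^*$. Thus whenever $F^j \geq q$ and an un-elected common candidate exists, property~3 forces the selection of a candidate approved by at least $q$ remaining votes, and a careful induction on $j$ counting these selections shows $t \geq \ell$, completing the proof.
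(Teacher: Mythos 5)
Your overall strategy---track the group's remaining weight $F^j = \sum_{i \in N^*} f_i^j$, observe that only iterations electing a group-approved candidate can reduce it and by at most $q$ each, and conclude that if fewer than $\ell$ group-approved candidates are elected then $F^k \geq q$ and some commonly-approved candidate survives unelected with approving weight at least $F^k$---is exactly the paper's argument (the paper gives it for phragm\'en-STV and defines the PJR-Exact family as the abstraction of that proof). The gap is in how you derive the final contradiction from property~3. Twice you assert more than property~3 gives: it does not ``forbid the algorithm from stopping'' (every rule in the family fills all $k$ slots no matter what), and it does not oblige the algorithm to elect the surviving common candidate, nor to elect group-approved candidates at all---it only forbids electing a candidate with \emph{less} than $q$ remaining approving weight while \emph{some} candidate (any candidate) with at least $q$ remains. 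So ``this candidate should have been selected at some iteration'' is a non sequitur: the algorithm could lawfully spend every iteration on other candidates that also hold at least $q$ weight.

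The missing closing idea is a weight budget over the whole electorate, combining property~3 with the \emph{exact}-removal clause of property~2. Under your hypothesis $t \leq \ell - 1$: some common candidate $c^*$ is never elected, and, as you argue, its approving weight at every iteration $j$ is at least $F^j \geq \ell q - tq \geq q$. Hence, by property~3, the candidate selected at each of the $k$ iterations has at least $q$ remaining approving weight at selection time, and then property~2 guarantees that \emph{exactly} $q$ (not merely ``at most $q$'') is removed at that iteration. Summing over all $k$ iterations, exactly $kq = n$ weight is removed, i.e.\ all of it, so $\sum_{i \in N} f_i^k = 0$, contradicting $F^k \geq q > 0$. (Equivalently: each iteration not touching $N^*$ removes exactly $q$ from voters outside $N^*$, whose total weight is at most $n - \ell q = (k-\ell)q$, so at most $k-\ell$ iterations avoid $N^*$ and hence at least $\ell$ distinct group-approved candidates are elected.) This accounting step is what your ``careful induction counting these selections'' has to be; with it your proof closes and coincides with the paper's.
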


Then, the next question is: can we find a subfamily of voting rules in
PJR-Exact that satisfy EJR? We address this question in the next
section.

\subsection{Next ideas}

First of all, lets analyze in detail what happens in the first three
iterations of phragm\'en-STV for example~\ref{ex2}.

\paragraph{Iteration 1} The first candidate selected is the most
approved. Such candidate is $c_5$, that is approved by 242 voters. For
each voter that approves $c_5$ (that is, the voters that approve
$\{c_1, c_2, c_5\}$ and the voters that approve $\{c_5, c_7\}$) it is
$f_i^1= f_i^0 \frac{242 - 120}{242}= \frac{122}{242}$. For all the
other voters it is $f_i^1= 1$.

To analyze what has happened in this iteration, lets introduce first
some ideas. First of all, for each not yet elected candidate $c$, we
can establish an upper bound on the possible values of $\ell$, for
which an $\ell$-cohesive group of voters can exist such that all the
members in the group approve $c$. Lets call such bound the {\it
  dissatisfaction level} of such candidate. One possible way (although
a not accurate enough) of defining such bound is $\lfloor \frac{k
  n_c}{n} \rfloor$.

For instance, we can establish the following bounds: $2= \lfloor
\frac{120 + 120}{120} \rfloor$ for $c_1$ and $c_2$, $2= \lfloor
\frac{120 + 122}{120} \rfloor$ for $c_5$ and $1= \lfloor
\frac{122+52}{120} \rfloor$ for $c_7$.

Then, for each voter $i$ we can also establish an upper bound on the
possible values of $\ell$, for which voter $i$ can be a member of a
$\ell$-cohesive group of voters. Lets call this again the {\it
  dissatisfaction level} of the voters. One possible way (although
again a not accurate enough) of defining such bound is the maximum
value of the dissatisfaction levels of the candidates in $A_i$. To
reach that bound would require that the maximum dissatisfaction level
$\ell$ of a candidate in $A_i$ is reached by $\ell$ of such candidates
and that all that $\ell$ candidates are (mostly) approved by the same
voters, so this is a rather pessimistic bound.

Taking into account these ideas, at iteration 1 things seem to be
going well. Observe that for all the voters that approve $\{c_1, c_2,
c_5\}$ and all the voters that approve $\{c_5, c_7\}$, the
dissatisfaction level is clearly less than or equal to 2 (because all
of $c_1$, $c_2$, $c_5$ and $c_7$ have a dissatisfaction level smaller
than or equal to 2). The EJR axiom requires that certain voter in an
$\ell$-cohesive group approves $\ell$ winners. If a voter $i$ has a
dissatisfaction level of $x$, we can think that it would be desirable
that for each candidate approved by $i$ that is added to the set of
winners no more than $\frac{1}{x}$ units of the vote of $i$ are
removed from the election. This would allow that $x$ candidates
approved by $i$ could be added to the set of winners before the entire
vote of $i$ is exhausted. This is the case of iteration 1, because we
remove $\frac{120}{242}$ from each voter that aproves $c_5$, which is
less that $\frac{1}{2}$, and the dissatisfaction level of the voters
that approve $c_5$ is at most 2.

Observe, however, that we can be more accurate in the computation of
the dissatisfaction levels. First of all, we observe that the
dissatisfaction level of a candidate can decrease during the execution
of a rule. Suppose that the dissatisfaction level of certain candidate
$c$ is initially $\ell$. Suppose that after certain number of
iterations some voters that approve $c$ have already at least $\ell$
of their approved candidates in the set of winners. In that situation
we cannot expect that such voters contribute with $\frac{1}{\ell}$
units of vote when candidate $c$ is elected, so we must take this into
account when computing the dissatisfaction levels.

An example of this is the situation of candidate $c_7$ after
iteration~1. Since $122 + 52= 174 > 120$ voters approve $c_7$ we must
expect that the initial dissatisfaction level of candidate $c_7$ must
be 1. However, after candidate $c_5$ is elected at iteration 1 the 122
voters that approve $\{c_5, c_7\}$ already approve one candidate in
the set of winners. For such voters $f_i^1= \frac{122}{242} <
\frac{1}{1}$ (here $\frac{1}{1}$ is the inverse of the initial
dissatisfacton level of candidate $c_7$). Se we can not count any more
on these voters to compute the dissatisfaction level of $c_7$ (and
therefore the dissatisfaction level of $c_7$ after iteration 1 must be
$\lfloor \frac{52}{120} \rfloor= 0$). We observe also that after
iteration 1 the remaining (fractions of) votes that approve $c_7$ are
$122 \frac{122}{242} + 52 \simeq 113,5 < 120$, so it seems difficult
to guarantee that $c_7$ will be elected at a future iteration. 

Observe also that the decrease of the dissatisfaction level of $c_7$
does not look dangerous with respect to our goal of satisfying EJR. In
fact, the voters that approve $c_7$ constitute a 1-cohesive group of
voters. For such voters the requirements imposed by EJR are satisfied
with the election of candidate $c_5$ because the number of voters that
do not approve at least one candidate (that is, voters that approve
only $c_7$) is 52 which is less that 120.

We observe also that when computing the dissatisfaction level of the
voters we do not need to take care of the dissatisfaction levels of
candidates that have already being added to the set of
winners. Suppose that for voter $i$ after $j$ candidates have been
added to the set of winners $W$, the maximum dissatisfaction level of
candidates in $A_i \cap W$ is $\ell$. If $|A_i \cap W| \geq \ell$,
then this voter satisfies the requirements imposed by EJR if she
belongs to a $\ell$-cohesive group of voters. If $|A_i \cap W| < \ell$
and the dissatisfaction levels of all candidates in $A_i \setminus W$
are strictly smaller than $\ell$, then this voter cannot belong to a
$\ell$-cohesive group of voters\footnote{It may happen that even if
  all the candidates in $A_i \setminus W$ have a dissatisfaction level
  strictly smaller than $\ell$ some of them had initially a
  dissatisfaction level greater than or equal to $\ell$. It is then
  possible that voter $i$ belongs to a $\ell$-cohesive group of
  voters, but in such case, other voters in the cohesive group must
  have at least $\ell$ of their approved candidates in the set of
  winners, and thus the requirements imposed by EJR would be also
  satisfied}. In summary, in the computation of the dissatisfaction
level of each voter $i$ we only need to take into account the
dissatisfaction levels of the candidates in $A_i \setminus W$ and the
number of candidates approved by voter $i$ that have already being
added to the set of winners (that is, $|A_i \cap W|$).

The above discussion motivates the following definitions.

\begin{definition}
\label{def:ell}

Consider a ballot profile $\mathcal{A}= (A_1, \ldots, A_n)$ over a
candidate set $C$ and a target committee size $k$, $k \leq |C|$.
Given a set of candidates $W$, $|W| \leq k$, we will represent the
dissatisfaction level of a candidate $c \in C \setminus W$ with
respect to the set of candidates $W$ as $\ell(c, W)$ and its value is
the highest non-negative integer such that $\ell(c, W)= \lfloor
\frac{k}{n} |\{i: c \in A_i \land |A_i \cap W| < \ell(c, W)\}|
\rfloor$.

When certain rule is used to obtain the set of winners for a given
election, and the rule consists of an iterative algorithm in which at
each iteration one candidate is added to the set of winners, we will
use the notation $\ell_j(c)$ to refer to dissatisfaction level of a
candidate $c$ with respect to the first $j$ ($j=0, \ldots, k$)
candidates selected by the rule. That is, for each $j=0, \ldots, k$,
if $W_j$ is the set of the first $j$ candidates added to the set of
winners by the given rule ($W_0= \emptyset$), and $c \in C \setminus
W_j$, then $\ell_j(c)= \ell(c, W_j)$.

\end{definition}

\begin{definition}
Consider a ballot profile $\mathcal{A}= (A_1, \ldots, A_n)$ over a
candidate set $C$ and a target committee size $k$, $k \leq |C|$. The
set of winners $W$ is computed with certain rule that belongs to the
PJR-Exact family. For $j=0, \ldots, k-1$, let $W_j$ be the set of the
first $j$ candidates added by the rule to the set of winners ($W_0=
\emptyset$). For $j=0, \ldots, k-1$, for each candidate $c \in C
\setminus W_j$ and for each voter $i$ such that $c \in A_i$, the
dissatisfaction level $\ell_j(i,c)$ of voter $i$ if candidate $c$ is
added to the set of winners in the next iteration is $\displaystyle
\ell_j(i,c)= \max_{c' \in A_i \setminus (W_j \cup \{c\})} \ell_j(c')$,
and the minimum fraction of $i$'s vote that should be kept for other
candidates if candidate $c$ is added to the set of winners in the next
iteration is $g_i^j(c)$, where

\begin{displaymath}
g_i^j(c)= \left\{ \begin{array}{ll}
0 & \textrm{if $\ell_j(i,c) \leq |A_i \cap W_j|$}\\
\frac{\ell_j(i,c) - |A_i \cap W_j| - 1}{\ell_j(i,c)} 
& \textrm{if $\ell_j(i,c) > |A_i \cap W_j|$}
\end{array} \right.
\end{displaymath}

For $j=0, \ldots, k-1$, we say that a candidate $c \in C \setminus
W_j$ is in a normal state after $j$ candidates have been added to the
set of winners if the following conditions hold:

\begin{enumerate}

\item

For each voter $i$ such that $c \in A_i$, if $\ell_j(i,c) > |A_i \cap
W_j|$ then it has to be $f_i^j \geq \frac{\ell_j(i,c) - |A_i \cap
  W_j|}{\ell_j(i,c)}$, and

\item

$\displaystyle \sum_{i: c \in A_i} (f_i^j - g_i^j(c)) \geq q$.

\end{enumerate}

\end{definition}

Candidates in {\it normal state} are those candidates that can be
safely added to the set of winners. First of all, for each voter $i$
such that $i$ approves the considered candidate $c$ the
dissatisfaction level of voter $i$ is $\ell_j(i,c)$ and this means
that we wish that voter $i$ will approve at least $\ell_j(i,c)$
winners at the end of the election. We note that in the computation of
$\ell_j(i,c)$ we excluded candidate $c$ because if $c$ is added to the
set of winners we would not need to consider the value of $\ell_j(c)$
in the computation of the dissatisfaction level of voter $i$. Then,
the first condition states that for each voter $i$ that approves $c$
either such voter is already satisfied (this means that the number of
already elected winners that the voter approves is at least
$\ell_j(i,c)$) or the voter has enough fraction of vote left so as to
be able to assign $\frac{1}{\ell_j(i,c)}$ to each of the additional
winners that such voter expects to approve in future. This is the
total number of winners that the voter wish to approve ($\ell_j(i,c)$)
minus the number of already elected winners that the voter approves
($|A_i \cap W_j|$).

The second condition establishes that if candidate $c$ is added to the
set of winners it will be possible to remove $q$ votes from the voters
that approve $c$ in such a way that each voter that needs to approve
additional winnerss could still assign $\frac{1}{\ell_j(i,c)}$ to each
of such winners.

We observe that in example~\ref{ex2} candidate $c_5$ is in normal
state before iteration 1. For the voters that approve $\{c_1, c_2,
c_5\}$ we have $\ell_0(i, c_5)= \ell_0(c_1)= \ell_0(c_2)= 2$ and for
the voters that approve $\{c_5, c_7\}$ we have $\ell_0(i, c_5)=
\ell_0(c_7)= 1$. For the first condition, and for all voters that
approve $c_5$ we have $f_i^0 = 1 \geq \frac{\ell_0(i, c_5) - |A_i \cap
  W_0|}{\ell_0(i, c_5)}= \frac{\ell_0(i, c_5) - 0}{\ell_0(i, c_5)}=
1$. For the second condition we have $\sum_{i: c_5 \in A_i} (f_i^0 -
g_i^0(c_5)) = 120 (1 - \frac{1}{2}) + 122 (1 - 0) \geq 120$. However,
candidate $c_7$ is not in normal state because for candidate $c_7$ we
have $\sum_{i: c_7 \in A_i} (f_i^0 - g_i^0(c_7)) = 122 (1 -
\frac{1}{2}) + 52 (1 - 0) = 113 < 120$.

The following lemma relates the concept of dissatisfaction level with
EJR.

\begin{lemma}
\label{lem:dl-ejr}

Consider a ballot profile $\mathcal{A}= (A_1, \ldots, A_n)$ over a
candidate set $C$ and a target committee size $k$, $k \leq
|C|$. Suppose that for certain candidate set $W$, $|W|= k$ it holds
that the dissatisfaction level of each candidate $c \in C \setminus W$
is $0$. Then $W$ provides EJR.

\end{lemma}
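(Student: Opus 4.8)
The plan is to argue by contraposition: I will assume that $W$ fails EJR and produce a candidate in $C \setminus W$ whose dissatisfaction level is strictly positive, which contradicts the hypothesis. So suppose there is an index $\ell \in \{1, \ldots, k\}$ and an $\ell$-cohesive set $N^* \subseteq N$ witnessing the failure of EJR, i.e. every voter $i \in N^*$ satisfies $|A_i \cap W| < \ell$. Since $N^*$ is $\ell$-cohesive we have $|\bigcap_{i \in N^*} A_i| \geq \ell$; call this common block of approved candidates $D$, with $|D| \geq \ell$. The first small step is to locate a candidate of $D$ outside $W$: if every candidate of $D$ were in $W$, then each $i \in N^*$ would approve at least $\ell$ members of $W$, contradicting $|A_i \cap W| < \ell$. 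Hence there is a candidate $c \in D \setminus W$, which is exactly the kind of candidate whose dissatisfaction level the lemma controls.

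Next I would unwind the self-referential definition of $\ell(c, W)$. For the fixed candidate $c$ and fixed $W$, introduce the auxiliary quantity $h(t) = \lfloor \frac{k}{n} |\{i : c \in A_i \land |A_i \cap W| < t\}| \rfloor$ for integers $t \geq 0$, so that $\ell(c, W)$ is by definition the largest $t$ with $h(t) = t$. Two structural facts about $h$ drive the argument. First, $h$ is non-decreasing in $t$, because enlarging $t$ can only enlarge the set $\{i : c \in A_i \land |A_i \cap W| < t\}$. Second, $h$ is bounded above by $k$, since the counted set has at most $n$ voters. Note also that $t = 0$ is always a fixed point (the set is empty), so the hypothesis $\ell(c, W) = 0$ says precisely that $h$ has no fixed point $t \geq 1$.

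The heart of the proof is to show that the candidate $c$ found above forces a fixed point of $h$ at height at least $\ell$. Every voter in $N^*$ approves $c$ (since $c \in D$) and satisfies $|A_i \cap W| < \ell$, so $N^* \subseteq \{i : c \in A_i \land |A_i \cap W| < \ell\}$; combining this with $|N^*| \geq \ell \frac{n}{k}$ gives $h(\ell) \geq \lfloor \frac{k}{n} \cdot \ell \frac{n}{k} \rfloor = \ell$. Now iterate $h$ starting from $\ell$: since $h(\ell) \geq \ell$ and $h$ is monotone, the sequence $\ell, h(\ell), h(h(\ell)), \ldots$ is non-decreasing and bounded by $k$, hence an integer sequence that stabilises at some value $t^* \geq \ell \geq 1$ with $h(t^*) = t^*$. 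This fixed point contradicts $\ell(c, W) = 0$, completing the argument.

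I expect the main obstacle to be the careful handling of the fixed-point definition of the dissatisfaction level, rather than any computation. In particular, one must be explicit that $\ell(c, W)$ is the \emph{largest} solution of $t = h(t)$ (not merely some solution), so that $\ell(c,W)=0$ rules out every fixed point above $0$; the monotonicity-plus-boundedness step is what converts the single inequality $h(\ell) \geq \ell$ into an actual fixed point of height at least $\ell$. The remaining steps — extracting $c \in D \setminus W$ and the quota estimate $h(\ell) \geq \ell$ — are routine once the cohesiveness inequalities $|N^*| \geq \ell \frac{n}{k}$ and $|A_i \cap W| < \ell$ are invoked.
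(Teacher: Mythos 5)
Your proof is correct and takes essentially the same route as the paper's: find a commonly approved candidate $c$ outside $W$, use cohesiveness to get $\lfloor \frac{k}{n}\,|\{i : c \in A_i \land |A_i \cap W| < \ell\}| \rfloor \geq \ell$, and conclude $\ell(c,W) \geq \ell > 0$, contradicting the hypothesis. The only difference is that your monotone-iteration argument converting $h(\ell) \geq \ell$ into an actual fixed point of height at least $\ell$ carefully justifies a step the paper asserts in one line; this is a welcome tightening, not a different approach.
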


\begin{proof}[Proof]
For the sake of contradiction suppose that for certain ballot profile
$\mathcal{A}= (A_1, \ldots, A_n)$ over a candidate set $C$ and certain
target committee size $k$, $k \leq |C|$, for certain candidate set
$W$, $|W|= k$ it holds that for each candidate $c \in C \setminus W$
it is $\ell(c, W)= 0$, but $W$ does not provide EJR.

Then there exists a set of voters $N^* \subseteq N$ and a possitive
integer $\ell$ such that $|N^*| \geq \ell \frac{n}{k}$ and
$|\bigcap_{i \in N^*} A_i| \geq \ell$ but $|A_i \cap W| < \ell$ for
each $i \in N^*$. Moreover, a candidate must exist such that it is
approved by all the voters in $N^*$ but she is not in $W$. Let $c^*$
be such candidate. Clearly, $N^* \subseteq \{i: c^* \in A_i \land |A_i
\cap W| < \ell\}$, and therefore, $\lfloor \frac{k}{n} |\{i: c^* \in
A_i \land |A_i \cap W| < \ell\}|\rfloor \geq \ell$. This implies that
$\ell(c^*, W) \geq \ell > 0$, a contradiction.

\end{proof}

Lets see now what happens in the next two iterations.

\paragraph{Iteration 2} The second candidate selected is $c_8$, that is 
approved by 241 of the remaining votes. For each voter that approves
$c_8$ (that is, the voters that approve $\{c_3, c_4, c_8\}$ and the
voters that approve $\{c_8, c_9\}$) it is $f_i^2= \frac{241 -
  120}{241} f_i^1 = \frac{121}{241}$.

Candidate $c_8$ is in normal state before iteration 2. For the voters
that approve $\{c_3, c_4, c_8\}$ we have $\ell_1(i, c_8)= \ell_1(c_3)=
\ell_1(c_4)= 2$ and for the voters that approve $\{c_8, c_9\}$ we have
$\ell_1(i, c_8)= \ell_1(c_9)= 1$. For the first condition, and for all
voters that approve $c_8$ we have $f_i^1 = 1 \geq \frac{\ell_1(i, c_8)
  - |A_i \cap W_1|}{\ell_1(i, c_8)}= \frac{\ell_1(i, c_8) -
  0}{\ell_1(i, c_8)}= 1$. For the second condition we have $\sum_{i:
  c_8 \in A_i} (f_i^1 - g_i^1(c_8)) = 120 (1 - \frac{1}{2}) + 121 (1 -
0) \geq 120$.

Things continue to look good.

\paragraph{Iteration 3} The third candidate selected is $c_6$, that is 
approved by 190 of the remaining votes. For each voter that approves
$c_6$ (that is, the voters that approve $\{c_1, c_2, c_6\}$ and the
voters that approve $\{c_3, c_4, c_6\}$) it is $f_i^3= \frac{190 -
  120}{190} f_i^2 = \frac{70}{190}$. 

Candidate $c_6$ is not in normal state before iteration 3. For the
voters that approve $\{c_1, c_2, c_6\}$ we have $\ell_2(i, c_6)=
\ell_2(c_1)= \ell_2(c_2)= 2$ and for the voters that approve $\{c_3,
c_4, c_6\}$ we have $\ell_2(i, c_6)= \ell_2(c_3)= \ell_2(c_4)=
2$. Candidate $c_6$ is not in normal state because $\sum_{i: c_6 \in
  A_i} (f_i^2 - g_i^2(c_6)) = 120 (1 - \frac{1}{2}) + 70 (1 -
\frac{1}{2}) = 95 < 120$. Moreover, we observe that the election of
candidate $c_6$ causes a big harm to candidates $c_1$ and
$c_2$. Voters that approve $c_1$ and $c_2$ form a 2-cohesive group of
voters. After iteration 3, all such voters approve only one of the
candidates in the set of winners. This means that to satisfy EJR we
need that an additional candidate approved by such voters would have
to be added to the set of winners (this has to be one of $c_1$ or
$c_2$). However, after iteration 3 the remaining votes that approve
$c_1$ and $c_2$ are $120 \frac{122}{242} + 120 \frac{70}{190} \simeq
104,7 < 120$, and therefore it seems difficult to guarantee that one
of $c_1$ or $c_2$ will be added to the set of winners (in fact they
will never be added to the set of winners). We observe that if the
fraction of vote that was removed from the voters that approve $\{c_1,
c_2, c_6\}$ were less than or equal to $\frac{1}{2}$ more than 120
voters that approve $c_1$ and $c_2$ would remain in the election, and
therefore we would be in the safe side.

The situation that has happened in iteration 3 is captured by the
following definitions.

\begin{definition}
Consider a ballot profile $\mathcal{A}= (A_1, \ldots, A_n)$ over a
candidate set $C$ and a target committee size $k$, $k \leq |C|$. The
set of winners $W$ is computed with certain rule that belongs to the
PJR-Exact family. For $j=0, \ldots, k$, let $W_j$ be the set of the
first $j$ candidates added by the rule to the set of winners ($W_0=
\emptyset$). For $j=0, \ldots, k-1$, for each candidate $c \in C
\setminus W_j$ and for each voter $i$ such that $c \in A_i$ let
$\displaystyle \ell_j(i,c)= \max_{c' \in A_i \setminus (W_j \cup
  \{c\})} \ell_j(c')$.

For $j=0, \ldots, k-1$, we say that a candidate $c \in C \setminus
W_j$ is in a starving state after $j$ candidates have been added to
the set of winners if for some voter $i$ such that $c \in A_i$ it is
$\ell_j(i,c) > |A_i \cap W_j|$ but $f_i^j < \frac{\ell_j(i,c) - |A_i
  \cap W_j|}{\ell_j(i,c)}$.

For $j=0, \ldots, k-1$, we say that a candidate $c \in C \setminus W_j$
is in a eager state after $j$ candidates have been added to the set of
winners if it is not in a starving state and $\sum_{i: c \in A_i}
f_i^j \geq q$ but $\displaystyle \sum_{i: c \in A_i} (f_i^j -
g_i^j(c)) < q$, where

\begin{displaymath}
g_i^j(c)= \left\{ \begin{array}{ll}
0 & \textrm{if $\ell_j(i,c) \leq |A_i \cap W_j|$}\\
\frac{\ell_j(i,c) - |A_i \cap W_j| - 1}{\ell_j(i,c)} 
& \textrm{if $\ell_j(i,c) > |A_i \cap W_j|$}
\end{array} \right.
\end{displaymath}

Finally, for $j=0, \ldots, k-1$, we say that a candidate $c \in C
\setminus W_j$ is in a insufficiently supported state after $j$
candidates have been added to the set of winners if it is not in a
starving state and $\sum_{i: c \in A_i} f_i^j < q$.
\end{definition}

\begin{definition}
\label{def:iter}
Consider a ballot profile $\mathcal{A}= (A_1, \ldots, A_n)$ over a
candidate set $C$ and a target committee size $k$, $k \leq |C|$. The
set of winners $W$ is computed with certain rule that belongs to the
PJR-Exact family. For $j=0, \ldots, k$, let $W_j$ be the set of
the first $j$ candidates added by the rule to the set of winners
($W_0= \emptyset$). For $j=0, \ldots, k-1$, for each candidate $c \in
C \setminus W_j$ and for each voter $i$ such that $c \in A_i$ let
$\displaystyle \ell_j(i,c)= \max_{c' \in A_i \setminus (W_j \cup
  \{c\})} \ell_j(c')$. 

For $j=1, \ldots, k$, we say that iteration $j$ is normal if the
following conditions hold:

\begin{enumerate}

\item

The candidate $c$ that is added to the set of winners is in a normal state.

\item 

For each voter $i$ such that $c \in A_i$ and $\ell_{j-1}(i,c) > |A_i
\cap W_{j-1}|$, it must be $f_i^j \geq
\frac{\ell_{j-1}(i,c) - |A_i \cap W_j|}{\ell_{j-1}(i,c)}$.

\end{enumerate}

For $j=1, \ldots, k$, we say that iteration $j$ is insufficiently
supported if the selected candidate is in an insufficiently supported
state and we say that iteration $j$ is not normal if such iteration is
neither a normal iteration nor an insufficiently supported iteration.

\end{definition}

According to these definitions, we observe that for example~\ref{ex2}
candidate $c_6$ was in an eager state before the execution of
iteration 3 and that candidates $c_1$ and $c_2$ are in a starving
state after the execution of iteration 3.

We observe that candidates that after certain iterarion are in an
eager state do not need to last in such state forever. For the
election in example~\ref{ex2} suppose that in the first iteration we
select candidate $c_5$ but that we remove all the 120 votes from the
voters that approve $\{c_5, c_7\}$; then, in the second iteration we
add candidate $c_1$ to the set of winners and we remove all the 120
votes that approve $\{c_1, c_2, c_5\}$. Now candidate $c_6$ is in a
normal state.

As we have already discussed, the selection of a candidate that is in
an eager state at iteration 3 is closely related to the fact that the
set of winners outputted by phragm\'en-STV for example~\ref{ex2} fails to
provide EJR. To avoid this situation we are going to follow a very
drastic\footnote{And maybe antidemocratic.} but simple approach: we
are going to avoid selecting candidates in an eager state as much as
possible (and we are going to see soon that we can do much to avoid
selecting candidates in an eager state).

Normal iterations have a number of nice properties that we are going
to discuss now. 

\begin{lemma}
\label{lem:l1}
Consider a ballot profile $\mathcal{A}= (A_1, \ldots, A_n)$ over a
candidate set $C$ and a target committee size $k$, $k \leq |C|$. The
set of winners $W$ is computed with certain rule that belongs to the
PJR-Exact family. For $j=0, \ldots, k$, let $W_j$ be the set of the
first $j$ candidates added by the rule to the set of winners ($W_0=
\emptyset$). For $j=0, \ldots, k$, and for each voter $i$, let
$\displaystyle \ell_j(i)= \max_{c \in A_i \setminus W_j} \ell_j(c)$.

Fix a value $j$ such that $1 \leq j \leq k$ and suppose that the first
$j$ iterations of the rule are all normal. Then for each voter $i$ it
holds that either $|A_i \cap W_j| \geq \ell_j(i)$ or $f_i^j \geq
\frac{\ell_j(i) - |A_i \cap W_j|}{\ell_j(i)}$.
\end{lemma}

\begin{proof}[Proof]

For $h=0, \ldots, k-1$, for each candidate $c \in C \setminus W_h$ and
for each voter $i$ such that $c \in A_i$ let $\displaystyle
\ell_h(i,c)= \max_{c' \in A_i \setminus (W_h \cup \{c\})}
\ell_h(c')$. Fix a voter $i \in N$. If no candidate approved by voter
$i$ is added to the set of winners in the first $j$ iterations then
the lemma trivially holds. Suppose that some candidates approved by
voter $i$ have been added to the set of winners during the first $j$
iterations but $|A_{i} \cap W_j| < \ell_j(i)$. Let $r$ be the last of
the first $j$ iterations in which a candidate approved by voter $i$ is
added to the set of winners and let $c_r$ be such candidate. Then,
$f_{i}^j= f_{i}^r$. We have already seen that the dissatisfaction
levels of the voters are monotonically non-increasing, and therefore,
it is $\ell_j(i) \leq \ell_{r-1}(i, c_r)$. Moreover, since iteration
$r$ is normal we have that $f_{i}^r \geq \frac{\ell_{r-1}(i, c_r) -
  |A_{i} \cap W_r|}{\ell_{r-1}(i, c_r)}$. We have

\begin{eqnarray*}
f_{i}^j &=& f_{i}^r \geq \frac{\ell_{r-1}(i, c_r) - |A_{i} \cap
  W_r|}{\ell_{r-1}(i, c_r)} = \frac{\ell_{r-1}(i, c_r) - |A_{i} \cap
  W_j|}{\ell_{r-1}(i, c_r)} \\
&\geq& \frac{\ell_j({i}) - |A_{i} \cap
  W_j|}{\ell_j({i})}
\end{eqnarray*}

\end{proof}

\begin{corollary}
\label{cor:cor1}
Consider a ballot profile $\mathcal{A}= (A_1, \ldots, A_n)$ over a
candidate set $C$ and a target committee size $k$, $k \leq |C|$. The
set of winners $W$ is computed with certain rule that belongs to the
PJR-Exact family. For $j=0, \ldots, k$, let $W_j$ be the set of the
first $j$ candidates added by the rule to the set of winners ($W_0=
\emptyset$). Fix a value $j$ such that $1 \leq j \leq k-1$ and suppose
that the first $j$ iterations of the rule are all normal. Then, after
the first $j$ iterations of the rule no candidate in $C \setminus W_j$
can be in starving state.
\end{corollary}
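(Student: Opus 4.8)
The plan is to obtain this directly from Lemma~\ref{lem:l1} by a short proof by contradiction, the only genuine content being a monotonicity observation. I would begin by assuming, for contradiction, that after the first $j$ (all normal) iterations some candidate $c \in C \setminus W_j$ is in a starving state. Unfolding the definition of starving state produces a voter $i$ with $c \in A_i$ for which $\ell_j(i,c) > |A_i \cap W_j|$ yet $f_i^j < \frac{\ell_j(i,c) - |A_i \cap W_j|}{\ell_j(i,c)}$. The whole argument then revolves around this single voter.

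The next step is to compare the two dissatisfaction parameters attached to $i$, namely $\ell_j(i,c)= \max_{c' \in A_i \setminus (W_j \cup \{c\})} \ell_j(c')$ appearing in the starving condition and $\ell_j(i)= \max_{c' \in A_i \setminus W_j} \ell_j(c')$ appearing in Lemma~\ref{lem:l1}. Since $A_i \setminus (W_j \cup \{c\}) \subseteq A_i \setminus W_j$, taking the maximum over the smaller index set can only decrease it, so $\ell_j(i,c) \leq \ell_j(i)$. Moreover, from $\ell_j(i,c) > |A_i \cap W_j| \geq 0$ we get $\ell_j(i,c) \geq 1$ and hence $\ell_j(i) \geq 1$, so every denominator used below is strictly positive.

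I would then invoke Lemma~\ref{lem:l1} for this voter $i$ (legitimate precisely because the first $j$ iterations are normal), which yields one of two alternatives: either $|A_i \cap W_j| \geq \ell_j(i)$, or $f_i^j \geq \frac{\ell_j(i) - |A_i \cap W_j|}{\ell_j(i)}$. The first alternative is dispatched at once: it gives $|A_i \cap W_j| \geq \ell_j(i) \geq \ell_j(i,c)$, directly contradicting $\ell_j(i,c) > |A_i \cap W_j|$.

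The remaining alternative is where the (admittedly mild) obstacle lies: I must reconcile a lower bound on $f_i^j$ stated in terms of $\ell_j(i)$ with the starving upper bound stated in terms of the \emph{smaller} quantity $\ell_j(i,c)$. The key is that, writing $a= |A_i \cap W_j| \geq 0$, the map $x \mapsto \frac{x - a}{x} = 1 - \frac{a}{x}$ is non-decreasing for $x > 0$; hence $\ell_j(i) \geq \ell_j(i,c)$ gives $\frac{\ell_j(i) - a}{\ell_j(i)} \geq \frac{\ell_j(i,c) - a}{\ell_j(i,c)}$. Chaining this with the lower bound produces $f_i^j \geq \frac{\ell_j(i,c) - |A_i \cap W_j|}{\ell_j(i,c)}$, which contradicts the starving inequality $f_i^j < \frac{\ell_j(i,c) - |A_i \cap W_j|}{\ell_j(i,c)}$. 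As both alternatives are impossible, no candidate in $C \setminus W_j$ can be in a starving state, establishing the corollary.
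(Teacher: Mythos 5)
Your proposal is correct and follows essentially the same route as the paper's proof: both invoke Lemma~\ref{lem:l1}, use the inclusion $A_i \setminus (W_j \cup \{c\}) \subseteq A_i \setminus W_j$ to get $\ell_j(i,c) \leq \ell_j(i)$, and then use the monotonicity of $x \mapsto \frac{x-a}{x}$ to transfer the lower bound on $f_i^j$ from $\ell_j(i)$ to $\ell_j(i,c)$. The only difference is presentational (you argue by contradiction and spell out the case analysis and positivity of denominators, which the paper leaves implicit).
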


\begin{proof}
For each voter $i$, let $\displaystyle \ell_j(i)= \max_{c \in A_i
  \setminus W_j} \ell_j(c)$. For each candidate $c \in C \setminus
W_j$ and for each voter $i$ such that $c \in A_i$, let $\displaystyle
\ell_j(i, c)= \max_{c' \in A_i \setminus W_j \cup \{c\}}
\ell_j(c')$. Clearly, for each candidate $c \in C \setminus W_j$ and
for each voter $i$ such that $c \in A_i$, it is $\ell_j(i) \geq
\ell_j(i, c)$. Moreover, by lemma~\ref{lem:l1} it is either $|A_i \cap
W_j| \geq \ell_j(i)$ or $f_i^j \geq \frac{\ell_j(i) - |A_i \cap
  W_j|}{\ell_j(i)}$. Therefore, if for certain voter $i$ it is
$\ell_j(i, c) > |A_i \cap W_j|$, then $f_i^j \geq \frac{\ell_j(i) -
  |A_i \cap W_j|}{\ell_j(i)} \geq \frac{\ell_j(i,c) - |A_i \cap
  W_j|}{\ell_j(i,c)}$.

\end{proof}

\begin{lemma}
\label{lem:l2}
Consider a ballot profile $\mathcal{A}= (A_1, \ldots, A_n)$ over a
candidate set $C$ and a target committee size $k$, $k \leq |C|$. The
set of winners $W$ is computed with certain rule that belongs to the
PJR-Exact family. For $j=0, \ldots, k$, let $W_j$ be the set of the
first $j$ candidates added by the rule to the set of winners ($W_0=
\emptyset$). Fix a value $j$ such that $1 \leq j \leq k$ and suppose
that the first $j$ iterations of the rule are all normal. Then, for
each candidate $c \in C \setminus W_j$ such that $\ell_j(c) \geq 1$,
it holds that $\sum_{i: c \in A_i \land |A_i \cap W_j| < \ell_j(c)}
(f_i^j - \frac{\ell_j(c) - |A_i \cap W_j| - 1}{\ell_j(c)}) \geq q$.
\end{lemma}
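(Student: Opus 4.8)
The plan is to reduce the claim to a per-voter estimate and then sum over the relevant index set. Write $L = \ell_j(c)$ and let $S = \{i : c \in A_i \land |A_i \cap W_j| < L\}$ be exactly the set of voters appearing in the sum. First I would extract a lower bound on $|S|$ from Definition~\ref{def:ell}: since $\ell_j(c) = \ell(c, W_j)$ satisfies $L = \lfloor \frac{k}{n}|S| \rfloor$, we have $L \leq \frac{k}{n}|S|$, and hence $|S| \geq L \cdot \frac{n}{k} = Lq$. This size bound is what will ultimately turn a per-voter contribution of $\frac{1}{L}$ into the required total of $q$.

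Next I would establish, for every $i \in S$, the inequality $f_i^j - \frac{L - |A_i \cap W_j| - 1}{L} \geq \frac{1}{L}$. The key input is Lemma~\ref{lem:l1}, available because the first $j$ iterations are assumed normal. Writing $w_i = |A_i \cap W_j|$ and $M_i = \ell_j(i)$, I would note two facts: since $c \in A_i \setminus W_j$, the definition of $\ell_j(i)$ as a maximum gives $M_i \geq \ell_j(c) = L$; and since $i \in S$ means $w_i < L \leq M_i$, the branch $|A_i \cap W_j| \geq \ell_j(i)$ of Lemma~\ref{lem:l1} cannot occur, so we must be in the remaining branch $f_i^j \geq \frac{M_i - w_i}{M_i}$.

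With this the per-voter estimate is a short computation: $f_i^j - \frac{L - w_i - 1}{L} \geq \frac{M_i - w_i}{M_i} - \frac{L - w_i - 1}{L} = \frac{w_i + 1}{L} - \frac{w_i}{M_i}$, and since $M_i \geq L$ we have $\frac{w_i}{M_i} \leq \frac{w_i}{L}$, so the right-hand side is at least $\frac{w_i + 1}{L} - \frac{w_i}{L} = \frac{1}{L}$. Summing over $i \in S$ and inserting $|S| \geq Lq$ then gives $\sum_{i \in S}\bigl(f_i^j - \frac{L - w_i - 1}{L}\bigr) \geq |S| \cdot \frac{1}{L} \geq Lq \cdot \frac{1}{L} = q$, which is the assertion.

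The step I expect to require the most care is the middle one: correctly arguing that $i \in S$ forces the ``unsatisfied'' branch of Lemma~\ref{lem:l1} rather than the branch $|A_i \cap W_j| \geq \ell_j(i)$, and that the maximum $\ell_j(i)$ dominates $\ell_j(c)$. The inequality $M_i \geq L$ is precisely what makes $\frac{w_i}{M_i} \leq \frac{w_i}{L}$ point the right way; note that if $M_i$ is strictly larger than $L$ the per-voter bound only improves, so the argument is not fragile. Everything else is the size bound from the definition of the dissatisfaction level and one summation.
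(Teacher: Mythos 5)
Your proof is correct and follows essentially the same route as the paper's: both extract the size bound $|S| \geq \ell_j(c)\frac{n}{k}$ from Definition~\ref{def:ell}, invoke Lemma~\ref{lem:l1} (noting the ``satisfied'' branch is excluded for $i \in S$ because $|A_i \cap W_j| < \ell_j(c) \leq \ell_j(i)$), and derive the per-voter bound of $\frac{1}{\ell_j(c)}$ before summing. The only difference is cosmetic: the paper uses monotonicity of $x \mapsto \frac{x - w}{x}$ to replace $\ell_j(i)$ by $\ell_j(c)$, whereas you expand and bound $\frac{w_i}{M_i} \leq \frac{w_i}{L}$, which is the same inequality in different clothing.
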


\begin{proof}[Proof]
For each voter $i$, let $\displaystyle \ell_j(i)= \max_{c' \in A_i
  \setminus W_j} \ell_j(c')$. Clearly, for each voter $i$ such that $c
\in A_i$ it is $\ell_j(i) \geq \ell_j(c)$. By lemma~\ref{lem:l1} for
each voter $i$ such that $|A_i \cap W_j| < \ell_j(i)$ it is $f_i^j
\geq \frac{\ell_j(i) - |A_i \cap
  W_j|}{\ell_j(i)}$. Definition~\ref{def:ell} implies that $|\{i: c
\in A_i \land |A_i \cap W_j| < \ell_j(c)\}| \geq \ell_j(c)
\frac{n}{k}$. We have

\begin{displaymath}
\begin{array}{l}
{\displaystyle \sum_{i: c \in A_i \land |A_i \cap W_j| < \ell_j(c)}
f_i^j - \frac{\ell_j(c) - |A_i \cap W_j| - 1}{\ell_j(c)})} \\
\quad\quad {\displaystyle \geq 
\sum_{i: c \in A_i \land |A_i \cap W_j| < \ell_j(c)}
(\frac{\ell_j(i) - |A_i \cap
  W_j|}{\ell_j(i)} - \frac{\ell_j(c) - |A_i \cap W_j| - 1}{\ell_j(c)})} \\
\quad\quad {\displaystyle \geq 
\sum_{i: c \in A_i \land |A_i \cap W_j| < \ell_j(c)}
(\frac{\ell_j(c) - |A_i \cap
  W_j|}{\ell_j(c)} - \frac{\ell_j(c) - |A_i \cap W_j| - 1}{\ell_j(c)})} \\
\quad\quad {\displaystyle= \sum_{i: c \in A_i \land |A_i \cap W_j| < \ell_j(c)}
\frac{1}{\ell_j(c)} \geq \ell_j(c) \frac{n}{k} \frac{1}{\ell_j(c)} = q}
\end{array}
\end{displaymath}
\end{proof}

Here comes the good news: it is always possible to avoid adding
candidates in an eager state to the set of winners.

\begin{theorem}
\label{theo:no-eager}
Consider a ballot profile $\mathcal{A}= (A_1, \ldots, A_n)$ over a
candidate set $C$ and a target committee size $k$, $k \leq |C|$. The
set of winners $W$ is computed with certain rule that belongs to the
PJR-Exact family. For $j=0, \ldots, k$, let $W_j$ be the set of the
first $j$ candidates added by the rule to the set of winners ($W_0=
\emptyset$). Fix a value $j$ such that $0 \leq j \leq k-1$ and suppose
that the first $j$ iterations of the rule are all normal ($j= 0$ means
that no iteration has been executed yet). Suppose that after the first
$j$ iterations certain candidate $c_1 \in C \setminus W_j$ is in an
eager state. Then, there exists other candidate $c_2 \in C \setminus
W_j$ that is in a normal state.
\end{theorem}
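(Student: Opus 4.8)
The plan is to exhibit an explicit candidate in a normal state, namely a candidate $c^\ast \in C \setminus W_j$ whose dissatisfaction level $\ell_j(c^\ast)$ is maximal among all candidates in $C \setminus W_j$. First I would record a structural fact used throughout: since the first $j$ iterations are all normal, no candidate in $C \setminus W_j$ is in a starving state. For $1 \leq j \leq k-1$ this is exactly Corollary~\ref{cor:cor1}; for $j = 0$ it follows directly, because $f_i^0 = 1$ and $|A_i \cap W_0| = 0$, so the strict inequality defining the starving state can never hold. Since condition~1 of the normal state is precisely the negation of the starving condition, this already shows that every candidate in $C \setminus W_j$ — in particular $c^\ast$ — satisfies condition~1 of the normal state, and it remains only to verify condition~2 for $c^\ast$.

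Next I would use the hypothesis that $c_1$ is in an eager state to guarantee $\ell_j(c^\ast) \geq 1$. Eagerness gives $\sum_{i: c_1 \in A_i} f_i^j \geq q > \sum_{i: c_1 \in A_i}(f_i^j - g_i^j(c_1))$, hence $\sum_{i: c_1 \in A_i} g_i^j(c_1) > 0$; so some voter $i$ approving $c_1$ has $g_i^j(c_1) > 0$, which forces $\ell_j(i, c_1) \geq 1$ and therefore the existence of a candidate in $A_i \setminus (W_j \cup \{c_1\})$ with positive dissatisfaction level. Consequently $\ell_j(c^\ast) \geq 1$, so Lemma~\ref{lem:l2} applies to $c^\ast$.

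The core computation — and the step I expect to be the main obstacle — is deriving condition~2 for $c^\ast$, that is $\sum_{i: c^\ast \in A_i}(f_i^j - g_i^j(c^\ast)) \geq q$, from the bound of Lemma~\ref{lem:l2}, which instead involves $\frac{\ell_j(c^\ast) - |A_i \cap W_j| - 1}{\ell_j(c^\ast)}$ summed only over voters with $|A_i \cap W_j| < \ell_j(c^\ast)$. The bridge is that $c^\ast$ has maximal dissatisfaction level, so $\ell_j(i, c^\ast) \leq \ell_j(c^\ast)$ for every voter $i$ approving $c^\ast$, together with the fact that the map $L \mapsto 1 - \frac{t+1}{L}$ (truncated to $0$ when $L \leq t$) is non-decreasing in $L$. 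This monotonicity yields $g_i^j(c^\ast) \leq \frac{\ell_j(c^\ast) - |A_i \cap W_j| - 1}{\ell_j(c^\ast)}$ when $|A_i \cap W_j| < \ell_j(c^\ast)$ and $g_i^j(c^\ast) = 0$ otherwise, so that $\sum_{i: c^\ast \in A_i}(f_i^j - g_i^j(c^\ast))$ is at least the Lemma~\ref{lem:l2} sum (over voters with $|A_i \cap W_j| < \ell_j(c^\ast)$) plus the nonnegative contribution of the remaining voters, which is $\geq q$. I would treat the two boundary cases of the monotonicity claim explicitly, since this is where an off-by-one could slip in.

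Finally, having verified both conditions of the normal state, I would conclude that $c^\ast$ is in a normal state and observe $c^\ast \neq c_1$: the normal and eager states are mutually exclusive, since condition~2 holds for $c^\ast$ but fails for $c_1$. Thus $c_2 := c^\ast$ is the required other candidate in a normal state.
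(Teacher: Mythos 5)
Your proof is correct, but it takes a genuinely different and more direct route than the paper's. The paper argues by escalation: starting from the eager candidate $c_1$, it shows that any eager candidate $c$ with $\ell_j(c) \geq 1$ must have some approving voter $i_3$ with $\ell_j(i_3,c) > \ell_j(c)$, hence there exists a candidate with strictly larger dissatisfaction level; combining this with Corollary~\ref{cor:cor1} (no starving candidates) and Lemma~\ref{lem:l2} (no insufficiently supported candidates among those with positive level), and using the fact that dissatisfaction levels are bounded by $k$, the chain of candidates with increasing levels must terminate at one in a normal state. You instead jump straight to the endpoint of that chain: a candidate $c^\ast$ maximizing $\ell_j(\cdot)$, for which $\ell_j(i,c^\ast) \leq \ell_j(c^\ast)$ holds for every approving voter, so the monotonicity of $L \mapsto 1 - \frac{t+1}{L}$ converts the bound of Lemma~\ref{lem:l2} into condition~2 of normality; your monotonicity step is exactly the contrapositive of the paper's key computation (where eagerness forces $\ell_j(i_3,c_3) > \ell_j(c_3)$). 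The ingredients are identical (Lemma~\ref{lem:l1}/Corollary~\ref{cor:cor1}, Lemma~\ref{lem:l2}, and that one algebraic inequality), so neither argument is more general, but yours eliminates the iteration and its termination argument, and it isolates a cleanly reusable fact --- any candidate of maximal dissatisfaction level $\geq 1$ is in a normal state --- which is precisely what the paper later needs in the SEJR subsection (where it is justified by citing this theorem together with Corollary~\ref{cor:cor1} and Lemma~\ref{lem:l2}). Two further points in your favour: you explicitly patch the $j=0$ case of the no-starving claim, which Corollary~\ref{cor:cor1} as stated ($1 \leq j \leq k-1$) does not cover even though the theorem permits $j=0$ (the paper glosses over this; strictly speaking the same remark applies to invoking Lemma~\ref{lem:l2} at $j=0$, where it holds because $f_i^0=1$ and $n_{c} \geq \ell_0(c)\frac{n}{k}$); and you explicitly verify $c^\ast \neq c_1$ via the mutual exclusivity of the normal and eager states, which the statement's word ``other'' requires and the paper leaves implicit.
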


\begin{proof}
For each candidate $c \in C \setminus W_j$ and for each voter $i$ such
that $c \in A_i$ let $\displaystyle \ell_j(i,c)= \max_{c' \in A_i
  \setminus (W_j \cup \{c\})} \ell_j(c')$. Also, for each candidate $c
\in C \setminus W_j$ and each voter $i$ such that $c \in A_i$ let
$g_i^j(c)$ be

\begin{displaymath}
g_i^j(c)= \left\{ \begin{array}{ll}
0 & \textrm{if $\ell_j(i,c) \leq |A_i \cap W_j|$}\\
\frac{\ell_j(i,c) - |A_i \cap W_j| - 1}{\ell_j(i,c)} 
& \textrm{if $\ell_j(i,c) > |A_i \cap W_j|$}
\end{array} \right.
\end{displaymath} 

First of all, we observe that by lemma~\ref{lem:l1} all the
candidates $c$ in $C \setminus W_j$ satisfy the first condition
required to be in normal state, that is, for each voter $i$ such that
$c \in A_i$ if $\ell_j(i,c) > |A_i \cap W_j|$ it has to be $f_i^j \geq
\frac{\ell_j(i,c) - |A_i \cap W_j|}{\ell_j(i,c)}$.

Now, if candidate $c_1$ is in an eager state, this means that
$\sum_{i: c_1 \in A_i} f_i^j \geq q$ but $\sum_{i: c_1 \in A_i} (f_i^j
- g_i^j(c_1)) < q$. Suppose first that $\ell_j(c_1)= 0$. Since for
each $i$ such that $c_1 \in A_i$ and $\ell_j(i, c_1)= 0$ it is
$g_i^j(c_1)= 0$, there should be certain voter $i_1$ such that $c_1
\in A_{i_1}$ and $\ell_j(i_1, c_1) > 0$ (otherwise $\sum_{i: c_1
  \in A_i} (f_i^j - g_i^j(c_1))= \sum_{i: c_1 \in A_i} f_i^j \geq q$,
and $c_1$ would be in a normal state). If $\ell_j(i_1, c_1) > 0$
this means that there must exist a candidate $c_3$ in $C \setminus
W_j$ such that $c_3 \in A_{i_1}$ and $\ell_j(c_3) > 0$. 

If $c_3$ is in a normal state then the theorem holds. By
lemma~\ref{lem:l2} it has to be $\sum_{i: c_3 \in A_i \land |A_i \cap
  W_j| < \ell_j(c_3)} (f_i^j - \frac{\ell_j(c_3) - |A_i \cap W_j| -
  1}{\ell_j(c_3)}) \geq q$ and therefore $\sum_{i: c_3 \in A_i} f_i^j
\geq q$. Thus, $c_3$ cannot be in an insufficiently supported
state. By corollary~\ref{cor:cor1} $c_3$ cannot be in a starving
state. Therefore, if $c_3$ is not in a normal state then $c_3$ has to
be in an eager state. If $c_3$ is in an eager state then it is
$\sum_{i: c_3 \in A_i} (f_i^j - g_i^j(c_3)) < q$. Combining this with
lemma~\ref{lem:l2} we have $\sum_{i: c_3 \in A_i} (f_i^j - g_i^j(c_3))
- \sum_{i: c_3 \in A_i \land |A_i \cap W_j| < \ell_j(c_3)} (f_i^j -
\frac{\ell_j(c_3) - |A_i \cap W_j| - 1}{\ell_j(c_3)}) <0$. But

\begin{displaymath}
\begin{array}{l}
{\displaystyle \sum_{i: c_3 \in A_i} (f_i^j - g_i^j(c_3))
- \sum_{i: c_3 \in A_i \land |A_i \cap W_j| < \ell_j(c_3)} (f_i^j -
\frac{\ell_j(c_3) - |A_i \cap W_j| - 1}{\ell_j(c_3)})} \\
{\displaystyle \quad\quad \geq \sum_{i: c_3 \in A_i \land |A_i \cap W_j| < \ell_j(c_3)} 
(f_i^j - g_i^j(c_3)) - (f_i^j -
\frac{\ell_j(c_3) - |A_i \cap W_j| - 1}{\ell_j(c_3)})} \\
{\displaystyle \quad\quad = \sum_{i: c_3 \in A_i \land |A_i \cap W_j| < \ell_j(c_3)} 
(\frac{\ell_j(c_3) - |A_i \cap W_j| - 1}{\ell_j(c_3)} - g_i^j(c_3))}
\end{array}
\end{displaymath}

This implies that for some voter $i_3$ such that $c_3 \in A_{i_3}$ and
$|A_{i_3} \cap W_j| < \ell_j(c_3)$ it has to be $\frac{\ell_j(c_3) -
  |A_{i_3} \cap W_j| - 1}{\ell_j(c_3)} - g_{i_3}^j(c_3) < 0$. Since
$|A_{i_3} \cap W_j| < \ell_j(c_3)$, it is $\frac{\ell_j(c_3) -
  |A_{i_3} \cap W_j| - 1}{\ell_j(c_3)} \geq 0$. Therefore,
$g_{i_3}^j(c_3) > 0$. It follows that $\ell_j(i_3, c_3) > |A_{i_3}
\cap W_j|$ and that $g_{i_3}^j(c_3) = \frac{\ell_j(i_3, c_3) -
  |A_{i_3} \cap W_j| - 1}{\ell_j(i_3, c_3)}$. But $\frac{\ell_j(c_3) -
  |A_{i_3} \cap W_j| - 1}{\ell_j(c_3)} - g_{i_3}^j(c_3) =
\frac{\ell_j(c_3) - |A_{i_3} \cap W_j| - 1}{\ell_j(c_3)} -
\frac{\ell_j(i_3, c_3) - |A_{i_3} \cap W_j| - 1}{\ell_j(i_3, c_3)} <
0$ implies that $\ell_j(i_3, c_3) > \ell_j(c_3)$. Then, a candidate
$c_4 \in C \setminus W_j$ must exist such that $c_4 \in A_{i_3}$ and
$\ell_j(c_4) > \ell_j(c_3)$.

We can then repeat for candidate $c_4$ and successively the reasoning
we did for $c_3$. At each step either the candidate that we find is in
normal state or there exists another candidate in $C \setminus W_j$
with a strictly higher dissatisfaction level. We note, however, that
dissatisfaction levels are bounded by $k$, and therefore, at some
point we have to find a candidate $c_2$ that is in a normal state.

The proof when $\ell_j(c_1)> 0$ is similar but we can start reasoning
directly as we did with candidate $c_3$. 
\end{proof}

We have now identified our strategy for the {\it EJR-Exact} family
of voting rules: we will run normal iterations as much as possible. If
at certain point no candidate is available that is in a normal state,
this means that all the unelected candidates have to be in an
insufficiently supported state. We then simply run enough
insufficiently supported iterations to complete the required number of
candidates.

\section{The EJR-Exact family of voting rules}

The EJR-Exact family of voting rules is defined in
algorithms~\ref{alg:ejr-lr1} and \ref{alg:ejr-lr2}. 

\begin{algorithm}[!htb]
\caption{The EJR-Exact family (part 1)\label{alg:ejr-lr1}}
Given the algorithms {\it Alg1}, {\it Alg2} and {\it Alg3}
that characterize each rule\\
\noindent\makebox[\linewidth]{\rule{\textwidth}{0.4pt}}
{\bf Input}: an approval-based multi-winner election $(N, C, \mathcal{A}, k)$\\
{\bf Output}: the set of winners $W$\\
\noindent\makebox[\linewidth]{\rule{\textwidth}{0.4pt}}

\algblockdefx[ForEach]{ForEach}{EndForEach}[1]
{\textbf{foreach} #1 \textbf{do}}{\textbf{end foreach}} 
\begin{algorithmic}[1]
\State $W\gets \emptyset$
\ForEach{$c \in C$}
  \State $\ell_0(c)\gets \lfloor \frac{k n_c}{n} \rfloor$
\EndForEach

\For{$j=1$ {\bf to} $k$}
  \State $C^{\textrm{st1}}\gets \emptyset$
  \ForEach{$c \in C \setminus W$}
    \State $x_c\gets 0$
    \ForEach{$i: c \in A_i$}
       \State $\displaystyle \ell_{j-1}(i,c)\gets 
         \max_{c' \in A_i \setminus (W \cup \{c\})} \ell_{j-1}(c')$
       \If{$\ell_{j-1}(i,c) > |A_i \cap W| + 1$}
         \State $x_c\gets x_c + f_i^{j-1} - 
           \frac{\ell_{j-1}(i,c) - (|A_i \cap W| + 1)}{\ell_{j-1}(i,c)}$ 
       \Else
         \State $x_c\gets x_c + f_i^{j-1}$ 
       \EndIf
    \EndForEach 
    \If{$x_c \geq q$}
      \State $C^{\textrm{st1}}\gets C^{\textrm{st1}} \cup \{c\}$
    \EndIf 
  \EndForEach  
\algstore{break:ejr-lr1}
\end{algorithmic}
\end{algorithm}

Each instance of EJR-Exact family is characterized by 3 algorithms:
{\it Alg1}, {\it Alg2} and {\it Alg3}. At each normal iteration first
{\it Alg1} is executed to select which of the candidates in a normal
state will be added to the set of winners. Then {\it Alg2} is executed
to remove $q$ votes from the voters that approve the selected
candidate in such a way that the rules that must be followed in a
normal iteration are respected. Finally {\it Alg3} is executed to
select the candidates that will be added to the set of winners in
insufficiently supported iterations.

\begin{algorithm}[htb]
\caption{The EJR-Exact family (part 2) \label{alg:ejr-lr2}}
\algblockdefx[ForEach]{ForEach}{EndForEach}[1]
{\textbf{foreach} #1 \textbf{do}}{\textbf{end foreach}} 
\begin{algorithmic}[1]
\algrestore{break:ejr-lr1}
  \If{$C^{\textrm{st1}} \ne \emptyset$}
    \State   \Comment{Stage 1: normal iteration} 
    \State Execute {\it Alg1} to select a candidate $w$ from $C^{\textrm{st1}}$
    \ForEach{$i \in N \setminus N_w$}
      \State $f_i^j\gets f_i^{j-1}$
    \EndForEach    
    \State Execute {\it Alg2} to remove $q$ votes from the voters that 
       approve $w$ 
    \Ensure $f_i^j \geq
\frac{\ell_{j-1}(i,w) - |A_i \cap W| - 1}{\ell_{j-1}(i,w)}$
for each voter $i$ such that $w \in A_i$ and $\ell_{j-1}(i,w) > |A_i
\cap W|$ 
    \State $W\gets W \cup \{w\}$
    \ForEach{$c \in C \setminus W$}
      \State $\ell_j(c)\gets \ell_{j-1}(c)$
      \While{$\ell_j(c) > \lfloor 
        \frac{k}{n} |\{i: c \in A_i \land |A_i \cap W| < \ell_j(c)\}|
        \rfloor$}
        \State $\ell_j(c)\gets \ell_j(c) - 1$
      \EndWhile
    \EndForEach
  \Else
    \State \textbf{break}
  \EndIf
\EndFor
\If{$|W| < k$}
  \State   \Comment{Stage 2: insufficiently supported iterations} 
  \State $j\gets |W|$
  \State Execute {\it Alg3} to add $k-j$ candidates from $C \setminus W$ to $W$
\EndIf
\State \textbf{return} $W$
\end{algorithmic}
\end{algorithm}

The execution of a rule in the EJR-Exact family is as follows. First,
for each candidate $c \in C$, its initial dissatisfaction level
$\ell_0(c)$ is computed in lines~2--4 of
algorithm~\ref{alg:ejr-lr1}. Then, a loop is executed (at most) $k$
times to add candidates to the set of winners. At each iteration $j$
of the loop, first for each candidate $c \in C \setminus W$ we check
if such candidate is in a normal state. The candidates that are in a
normal state are stored in $C^{\textrm{st1}}$. This is done in
lines~6--20 of algorithm~\ref{alg:ejr-lr1}.

If $C^{\textrm{st1}}$ is not empty then we run a normal iteration
(lines~22--34 of algorithm~\ref{alg:ejr-lr2}). Algorithm {\it Alg1} is
executed to select one candidate $w$ from $C^{\textrm{st1}}$ that will
be added to the set of winners and then algorithm {\it Alg2} is
executed to remove $q$ votes from the voters that approve candidate
$w$. This has to be done in such a way that the rules established for
normal iterations are respected (see definition~\ref{def:iter}).
Finally, we update the dissatisfaction levels of the candidates.

If $C^{\textrm{st1}}$ is empty then we exit the loop and, if
necessary, run enough insufficiently supported iterations to complete
$W$ (lines~40--42 of algorithm~\ref{alg:ejr-lr2}).

\begin{theorem}
All the voting rules in the EJR-Exact family satisfy EJR.
\end{theorem}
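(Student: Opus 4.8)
The plan is to reduce the statement to Lemma~\ref{lem:dl-ejr}: I will show that the committee $W$ returned by any rule in the EJR-Exact family satisfies $\ell(c,W)=0$ for every $c\in C\setminus W$, and then Lemma~\ref{lem:dl-ejr} immediately gives that $W$ provides EJR. The whole argument rests on one structural fact: by construction, every iteration executed in Stage~1 selects a candidate from $C^{\mathrm{st1}}$ and is therefore a \emph{normal} iteration. Consequently, at the moment Stage~1 terminates with winning set $W_j$ (where $j=|W_j|$), the first $j$ iterations are all normal, so Lemmas~\ref{lem:l1} and~\ref{lem:l2}, Corollary~\ref{cor:cor1}, and Theorem~\ref{theo:no-eager} are all applicable.

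The first step is to establish the inequality $\sum_{i:\,c\in A_i} f_i^j < q$ for every $c\in C\setminus W_j$, splitting according to how Stage~1 ends. If Stage~1 runs the full $k$ iterations, then each normal iteration removes exactly $q$ votes (condition~2 of the normal state forces $\sum_{i:\,c\in A_i}f_i^{j-1}\ge q$, so exactly $q$ units are deleted), hence a total of $kq=n$ units are removed and $f_i^k=0$ for all $i$; the inequality then holds trivially. If instead Stage~1 breaks early at some $j\le k-1$ because $C^{\mathrm{st1}}=\emptyset$, then no candidate is normal; Corollary~\ref{cor:cor1} rules out starving candidates and Theorem~\ref{theo:no-eager} (valid for $0\le j\le k-1$) rules out eager ones, since an eager candidate would force a normal one to exist. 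As the four states partition $C\setminus W_j$, every remaining candidate must be insufficiently supported, which is exactly the desired inequality.

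The second step feeds this inequality into Lemma~\ref{lem:l2}. Suppose toward a contradiction that $\ell_j(c)\ge 1$ for some $c\in C\setminus W_j$ (this requires $j\ge 1$). Since $\frac{\ell_j(c)-|A_i\cap W_j|-1}{\ell_j(c)}\ge 0$ on the index set $\{i:\,c\in A_i,\ |A_i\cap W_j|<\ell_j(c)\}$, Lemma~\ref{lem:l2} gives
\begin{displaymath}
\sum_{i:\,c\in A_i} f_i^j \;\ge\; \sum_{i:\,c\in A_i,\ |A_i\cap W_j|<\ell_j(c)}\Bigl(f_i^j-\tfrac{\ell_j(c)-|A_i\cap W_j|-1}{\ell_j(c)}\Bigr)\;\ge\; q,
\end{displaymath}
contradicting $\sum_{i:\,c\in A_i}f_i^j<q$. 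Hence $\ell_j(c)=\ell(c,W_j)=0$ for every $c\in C\setminus W_j$. The degenerate boundary case $j=0$ (Stage~1 breaks before electing anyone) is handled directly instead of through Lemma~\ref{lem:l2}: there $\sum_{i:\,c\in A_i}f_i^0=n_c<q=n/k$, so $\ell_0(c)=\lfloor kn_c/n\rfloor=0$ by Definition~\ref{def:ell}.

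The final step propagates this to the returned committee. If $j=k$ we are done immediately. Otherwise Stage~2 enlarges $W_j$ to a committee $W$ of size $k$; since $\ell(c,W)$ is non-increasing as $W$ grows (enlarging $W$ can only shrink the set $\{i:\,c\in A_i,\ |A_i\cap W|<\ell\}$ for each fixed $\ell$), every surviving candidate $c\in C\setminus W\subseteq C\setminus W_j$ still satisfies $\ell(c,W)\le \ell(c,W_j)=0$, hence $\ell(c,W)=0$. Lemma~\ref{lem:dl-ejr} then yields EJR. I expect the main obstacle to be the case analysis at the end of Stage~1: one must argue carefully that ``no candidate in a normal state'' forces \emph{every} remaining candidate into the insufficiently supported state (this is precisely where Corollary~\ref{cor:cor1} and Theorem~\ref{theo:no-eager} do the work), and one must separately treat the full-length run, where the state vocabulary is undefined at $j=k$ and one instead argues directly that all vote mass has been exhausted.
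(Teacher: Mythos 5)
Your proof is correct, but it takes a noticeably different route from the paper's. The paper argues by contradiction directly from an $\ell$-cohesive group: its witness candidate $c \notin W$ keeps $\ell_j(c) \geq \ell \geq 1$ throughout, and the argument splits into two cases, playing total-mass exhaustion against Lemma~\ref{lem:l1} when all $k$ iterations are normal, and invoking Lemma~\ref{lem:l2} at the first insufficiently supported iteration otherwise. You instead prove the stronger, positive invariant that every candidate outside the Stage-1 winning set ends with dissatisfaction level $0$, using only the contrapositive of Lemma~\ref{lem:l2} in both cases (mass exhaustion gives $\sum_{i: c \in A_i} f_i^k = 0 < q$ in a full run; the state partition plus Corollary~\ref{cor:cor1} and Theorem~\ref{theo:no-eager} gives $\sum_{i: c \in A_i} f_i^j < q$ after an early break), and you then close via Lemma~\ref{lem:dl-ejr} --- a lemma the paper proves but never uses in its own proof of this theorem. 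This buys a unified case analysis, dispenses with Lemma~\ref{lem:l1}, and yields extra information about the output committee. The price is one step the paper does not need: transferring $\ell(c, W_j) = 0$ across the Stage-2 additions requires that $\ell(c,\cdot)$ be non-increasing under set inclusion. That fact is true, but your one-line justification (``enlarging $W$ shrinks the voter sets'') is a bit quick, since Definition~\ref{def:ell} is a largest-fixed-point definition: you need that the largest fixed point of the monotone map $\ell \mapsto \lfloor \frac{k}{n}|\{i : c\in A_i \land |A_i \cap W| < \ell\}|\rfloor$ cannot increase when $W$ grows, which one gets by iterating the map from any $\ell$ with value at least $\ell$ (the same implicit fact the paper relies on when asserting $\ell_j(c) \geq \ell$ for its witness candidate, so this matches the paper's own level of rigor). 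One further cosmetic point: you cite Corollary~\ref{cor:cor1} at $j=0$, where it is not stated; this is harmless because no candidate can be starving when $f_i^0 = 1$.
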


\begin{proof}[Proof]
It is very easy to prove that all the rules in the EJR-Exact family
satisfy EJR. For the sake of contradiction suppose that for certain
ballot profile $\mathcal{A}= (A_1, \ldots, A_n)$ over a candidate set
$C$ and certain target committee size $k$, $k \leq |C|$, the set of
winners $W$ that certain rule that belongs to the EJR-Exact family
outputs does not provide EJR. Then there exists a set of voters $N^*
\subseteq N$ and a possitive integer $\ell$ such that $|N^*| \geq \ell
\frac{n}{k}$ and $|\bigcap_{i \in N^*} A_i| \geq \ell$ but $|A_i \cap
W| < \ell$ for each $i \in N^*$. Moreover, a candidate must exist such
that candidate is approved by all the voters in $N^*$ but she is not
in the set of winners. Let $c$ be such candidate. Observe that
according to definition~\ref{def:ell}, for each iteration $j$ it is
$\ell_j(c) \geq \ell$.

Suppose first that for that election and rule all the iterations are
normal. Then, since at each iteration we remove $q=\frac{n}{k}$ votes
at the end of the election we have removed all the votes from the
election. However, lemma~\ref{lem:l1} says that for each voter $i$ in
$N^*$ it must be $f_i^k \geq \frac{\ell_k(i) - |A_i \cap
  W|}{\ell_k(i)} > 0$, a contradiction.

Suppose then that for that election and rule we have some
insufficiently supported iterations and let $j$ be the first
insufficiently supported iteration and $W_{j-1}$ be the set of the
first $j-1$ candidates that have been added to the set of winners. We
have seen that when we ran the first insufficiently supported
iteration all the candidates are in an insufficiently supported state,
that is, for each $c' \in C \setminus W_{j-1}$ it is $\sum_{i: c' \in
  A_i} f_i^{j-1} < q$. However, lemma~\ref{lem:l2} says that for
candidate $c$ it is $\sum_{i: c \in A_i \land |A_i \cap W_j| <
  \ell_j(c)} (f_i^j - \frac{\ell_j(c) - |A_i \cap W_j| -
  1}{\ell_j(c)}) \geq q$ and therefore it is also $\sum_{i: c \in A_i}
f_i^{j-1} \geq q$ which is again a contradiction.
\end{proof}

We establish now bounds in computational complexity of the instances
of the EJR-Exact family in terms of the number of arithmetic
operations. We note that to take into account the size of the operands
implies to establish bounds on the number of bits required to
represent each $f_i^j$. While this cannot be done in general for all
the instances of the EJR-Exact family, it can be easily done for
particular instaces. 

\begin{theorem}
\label{theo:complex}
Suppose that for certain instance of EJR-Exact the number of
arithmetic operations required to execute Alg1, Alg2 and Alg3 is
bounded, respectively, by $O(o_1)$, $O(o_2)$ and $O(o_3)$ (depending
on the particular instance, such bounds can depend on $n$, $m$, $k$,
and other factors). Then, the number of arithmetic operations required
to execute such instance in the worst case is bounded by $O(nm^2k +
k(o_1+o_2) + o_3)$.

\end{theorem}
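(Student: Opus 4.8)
The plan is to walk through Algorithms~\ref{alg:ejr-lr1} and~\ref{alg:ejr-lr2} block by block, bound the arithmetic operations contributed by each, and then check that the total collapses to the claimed expression. The natural decomposition is: (i) the initialisation of the dissatisfaction levels $\ell_0(c)$ in lines~2--4 of Algorithm~\ref{alg:ejr-lr1}; (ii) the body of the main loop, executed at most $k$ times, whose dominant part is the construction of $C^{\textrm{st1}}$ in lines~6--20; (iii) the calls to \emph{Alg1} and \emph{Alg2} made during each normal iteration; (iv) the bookkeeping that maintains the fractions $f_i^j$ and refreshes the dissatisfaction levels $\ell_j(c)$; and (v) the single call to \emph{Alg3} in Stage~2 (lines~40--42 of Algorithm~\ref{alg:ejr-lr2}). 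Throughout I would keep each quantity $|A_i\cap W|$ stored and updated incrementally, which costs $O(n)$ operations whenever a candidate enters $W$, so that every lookup of $|A_i\cap W|$ is $O(1)$.

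First I would dispose of the inexpensive pieces. Computing all $n_c$ and then all $\ell_0(c)=\lfloor k n_c/n\rfloor$ costs $O(nm)$. Each normal iteration resets $f_i^j$ for the voters in $N\setminus N_w$ in $O(n)$ operations and invokes \emph{Alg1} and \emph{Alg2} once, so over the at most $k$ iterations these contribute $O(nk+k(o_1+o_2))$; the lone Stage~2 call to \emph{Alg3} contributes $O(o_3)$. Every one of these terms is either already present in, or dominated by, the target bound.

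The term $O(nm^2k)$ should come entirely from building $C^{\textrm{st1}}$. For a fixed candidate $c\in C\setminus W$ and a fixed voter $i$ with $c\in A_i$, evaluating $\ell_{j-1}(i,c)=\max_{c'\in A_i\setminus(W\cup\{c\})}\ell_{j-1}(c')$ is a maximum over at most $|A_i|\le m$ already-stored values, hence $O(m)$ operations, after which the update of $x_c$ is $O(1)$. For a fixed voter $i$ this maximum is computed once for each of the at most $|A_i|$ candidates she approves, so voter $i$ contributes $O(|A_i|^2)=O(m^2)$ operations per iteration; summing over voters gives $O(nm^2)$ per iteration and $O(nm^2k)$ over the whole execution.

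The step I expect to require the most care, because a careless count can blow it up, is the refreshing of the dissatisfaction levels: the \textbf{while} loop that lowers each $\ell_j(c)$ until it equals $\lfloor \frac{k}{n}|\{i:c\in A_i\wedge|A_i\cap W|<\ell_j(c)\}|\rfloor$. Each test of the guard recomputes that cardinality in $O(n_c)\le O(n)$ operations, so the issue is bounding the number of guard tests. Here I would invoke monotonicity: for a fixed $c$ the value $\ell_j(c)$ is non-increasing in $j$ and starts at $\ell_0(c)\le k$, so the decrements of $\ell_j(c)$ summed over all iterations number at most $k$; adding the one compulsory test per iteration, candidate $c$ incurs only $O(k)$ guard tests across the whole run. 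This gives $O(mk)$ tests in total and $O(nmk)$ operations, which, since $k\le m$, is absorbed into $O(nm^2k)$. Adding the five contributions yields $O(nm^2k+k(o_1+o_2)+o_3)$, as claimed.
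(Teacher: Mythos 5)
Your proposal is correct and follows essentially the same route as the paper's proof: the same block-by-block accounting (with $|A_i\cap W|$ maintained incrementally in $O(n)$ per elected candidate), the same $O(nm^2)$ per-iteration cost for building $C^{\textrm{st1}}$, and the same key amortization showing each candidate's \textbf{while}-loop guard is tested at most $O(k)$ times over the whole run, giving $O(nmk)$ for the dissatisfaction-level updates, absorbed into $O(nm^2k)$. The only cosmetic difference is that you go directly to the amortized bound, whereas the paper first states the naive $O(nmk^2)$ estimate and then refines it.
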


\begin{proof}

It is enough to review the cost of each part of the algorithm. We
assume that the profile is stored in a table with a row per voter and
a column per candidate. Each cell of the table is a bit that is set to
$1$ if the corresponding voter approves the corresponding
candidate. We also assume that the value of $|A_i \cap W|$ for each
voter $i$ is stored in an array of $n$ counters (one for each
voter). Initially, the value of all the counters is set to zero. Each
time that a candidate $w$ is added to the set of winners the counters
corresponding to the voters that approve $w$ are incremented. This can
be done in $O(n)$.

The initial dissatisfaction levels computed in lines~2--4 of
algorithm~\ref{alg:ejr-lr1} requires to compute $n_c$ (this can be
done in $O(n)$) for each candidate, so the total cost of computing the
initial dissatisfaction levels is $O(nm)$.

Then we ran at most $k$ times a loop in which the following tasks are
done:

\begin{enumerate}

\item

First, for each candidate $c$, we have to check if such candidate is
in normal state (lines~7--20 of algorithm~\ref{alg:ejr-lr1}). To do
that, for each voter $i$ that approves the candidate under
consideration we have to compute $\ell_{j-1}(i,c)$. This depends on
the number of candidates approved by voter $i$, and therefore, it can
be done in $O(m)$ operations. In summary, lines~7--20 of
algorithm~\ref{alg:ejr-lr1} can be executed in $O(nm^2)$.

\item

In line~23 of algorithm~\ref{alg:ejr-lr2} a candidate in normal state
is selected in $O(o_1)$.

\item

In line~27 of algorithm~\ref{alg:ejr-lr2} $q$ votes are removed from
the election in $O(o_2)$.

\item

In line~28 of algorithm~\ref{alg:ejr-lr2} a candidate $w$ is added to
the set of winners. As we explained before, also the counters that
store the value of $|A_i \cap W|$ for the voters that approve $w$ are
incremented. This can be done in $O(n)$.

\item

In lines~29--34 of algorithm~\ref{alg:ejr-lr2} the dissatisfaction
levels are updated for each candidate. The while loop in lines~31--33
is executed at most $k$ times, because the dissatisfaction levels are
bounded by $k$. Each time that we execute such while loop we need to
compute $|\{i: c \in A_i \land |A_i \cap W| < \ell_j(c)\}|$. This can
be done in $O(n)$ because we assume that the values of $|A_i \cap W|$
and $\ell_j(c)$ have already been computed and stored in
memory. Therefore, the total computational cost of the execution of
lines~29--34 is bounded by $O(nmk)$.

\end{enumerate}

Finally, the cost of executing line~42 of algorithm~\ref{alg:ejr-lr2}
is bounded by $O(o_3)$. Combining all these data we obtain a worst
case bound of $O(nm^2k + nmk^2 + k(o_1+o_2) + o_3)$. 

It is however possible to establish a better bound for the cost of the
computation of the dissatisfaction levels. As we have already seen,
the initial dissatisfaction level of each candidate is bounded by
$k$. Based on this we have established that each time that the while
loop in lines~31--33 of algorithm~\ref{alg:ejr-lr2} is reached, such
loop is executed at most $k$ times. This is because each time we enter
such loop the dissatisfaction level of the candidate under
consideration is decremented and the dissatisfaction levels cannot
fall below 0.

However, a similar reasoning allows us to conclude that for each
candidate the while loop in lines~31--33 of
algorithm~\ref{alg:ejr-lr2} can be entered at most $k$ times during
the whole execution of an EJR-Exact rule. To see why we note that the
dissatisfaction level of a candidate is never incremented, its initial
value is at most $k$, its value cannot fall below $0$ and each time
that the while loop in lines~31--33 of algorithm~\ref{alg:ejr-lr2} is
entered the dissatisfaction level of the candidate under consideration
is decremented by one unit.

We have to be careful here. The condition in line~31 of
algorithm~\ref{alg:ejr-lr2} is checked each time that the loop is
reached. Therefore, for each candidate such condition is checked at
most $2k$ times (once for each iteration of the outer loop from line~5
of algorithm~\ref{alg:ejr-lr1} to line~38 of
algorithm~\ref{alg:ejr-lr2} plus at most $k$ additional times when the
condition holds). The cost of the evaluation of the condition in
line~31 of algorithm~\ref{alg:ejr-lr2} is $O(n)$. The body of the
while loop (line~32 of algorithm~\ref{alg:ejr-lr2}) is executed in
constant time at most $k$ times for each candidate. Therefore, the
total cost of updating the dissatisfaction levels of the candidates
during an execution of an EJR-Exact rule is at most $O(2nmk + mk)=
O(nmk)$ and the total cost of an execution of an EJR-Exact rule is at
most $O(nm^2k + k(o_1+o_2) + o_3)$.

\end{proof}

A very natural instance of the EJR-Exact family is {\it
  EJR-LR-Even}, defined in algorithms~\ref{alg:ejr-ev1},
\ref{alg:ejr-ev2} and~\ref{alg:ejr-ev3}.

\begin{algorithm}[!htb]
\caption{EJR-LR-Even (part 1)\label{alg:ejr-ev1}}
{\bf Input}: an approval-based multi-winner election $(N, C, \mathcal{A}, k)$\\
{\bf Output}: the set of winners $W$\\
\noindent\makebox[\linewidth]{\rule{\textwidth}{0.4pt}}

\algblockdefx[ForEach]{ForEach}{EndForEach}[1]
{\textbf{foreach} #1 \textbf{do}}{\textbf{end foreach}} 
\begin{algorithmic}[1]
\State $W\gets \emptyset$
\ForEach{$c \in C$}
  \State $\ell_0(c)\gets \lfloor \frac{k n_c}{n} \rfloor$
\EndForEach

\For{$j=1$ {\bf to} $k$}
  \State $C^{\textrm{st1}}\gets \emptyset$
  \ForEach{$c \in C \setminus W$}
    \State $s_c^j\gets \sum_{i: c \in A_i} f_i^{j-1}$
    \State $x_c\gets 0$
    \ForEach{$i: c \in A_i$}
       \State $\displaystyle \ell_{j-1}(i,c)\gets 
         \max_{c' \in A_i \setminus (W \cup \{c\})} \ell_{j-1}(c')$
       \If{$\ell_{j-1}(i,c) > |A_i \cap W| + 1$}
         \State $x_c\gets x_c + f_i^{j-1} - 
           \frac{\ell_{j-1}(i,c) - (|A_i \cap W| + 1)}{\ell_{j-1}(i,c)}$ 
       \Else
         \State $x_c\gets x_c + f_i^{j-1}$ 
       \EndIf
    \EndForEach 
    \If{$x_c \geq q$}
      \State $C^{\textrm{st1}}\gets C^{\textrm{st1}} \cup \{c\}$
     \EndIf 
  \EndForEach  
\algstore{break:ejr-ev1}
\end{algorithmic}
\end{algorithm}

\begin{algorithm}[htb]
\caption{EJR-LR-Even (part 2) \label{alg:ejr-ev2}}
\algblockdefx[ForEach]{ForEach}{EndForEach}[1]
{\textbf{foreach} #1 \textbf{do}}{\textbf{end foreach}} 
\begin{algorithmic}[1]
\algrestore{break:ejr-ev1}
  \If{$C^{\textrm{st1}} \ne \emptyset$}          
    \State $w\gets 
      \underset{c \in C^{\textrm{st1}}}{\textrm{argmax}} \ s_c^j$
      \Comment{Stage 1: normal iteration} 
    \ForEach{$i \in N \setminus N_w$}
      \State $f_i^j\gets f_i^{j-1}$
    \EndForEach    
    \ForEach{$i \in N_w$}
       \If{$\ell_{j-1}(i,c) > |A_i \cap W| + 1$}
         \State $f_i^j\gets f_i^{j-1} -  \frac{q}{x_w} (f_i^{j-1} - 
           \frac{\ell_{j-1}(i,c) - (|A_i \cap W| + 1)}{\ell_{j-1}(i,c)})$ 
       \Else
         \State $f_i^j\gets f_i^{j-1} - \frac{q}{x_w} f_i^{j-1}$ 
       \EndIf
    \EndForEach    
    \State $W\gets W \cup \{w\}$
    \ForEach{$c \in C \setminus W$}
      \State $\ell_j(c)\gets \ell_{j-1}(c)$
      \While{$\ell_j(c) > \lfloor 
        \frac{k}{n} |\{i: c \in A_i \land |A_i \cap W| < \ell_j(c)\}| 
        \rfloor$}
        \State $\ell_j(c)\gets \ell_j(c) - 1$
      \EndWhile
    \EndForEach
  \Else
    \State \textbf{break}
  \EndIf
\EndFor
\algstore{break:ejr-ev2}
\end{algorithmic}
\end{algorithm}

\begin{algorithm}[htb]
\caption{EJR-LR-Even (part 3) \label{alg:ejr-ev3}}
\algblockdefx[ForEach]{ForEach}{EndForEach}[1]
{\textbf{foreach} #1 \textbf{do}}{\textbf{end foreach}} 
\begin{algorithmic}[1]
\algrestore{break:ejr-ev2}
\If{$|W| < k$}       
  \For{$j=|W|+1$ {\bf to} $k$} 
  \Comment{Stage 2: insufficiently supported iterations}
    \ForEach{$c \in C \setminus W$}
      \State $s_c^j\gets \sum_{i: c \in A_i} f_i^{j-1}$
    \EndForEach    
    \State $w\gets 
      \underset{c \in C \setminus W}{\textrm{argmax}} \ s_c^j$
    \State $W\gets W \cup \{w\}$
    \ForEach{$i \in N \setminus N_w$}
      \State $f_i^j\gets f_i^{j-1}$
    \EndForEach    
    \ForEach{$i \in N_w$}
      \State $f_i^j\gets 0$
    \EndForEach
  \EndFor
\EndIf
\State \textbf{return} $W$
\end{algorithmic}
\end{algorithm}

\begin{example}
\label{ex3}
The set of winners produced by EJR-LR-Even for the election shown in
example~\ref{ex2} is (in this order) $c_5, c_8$, $c_1$ or $c_2$, $c_3$
or $c_4$, $c_{10}, \ldots, c_{21}, c_6$, and $c_9$.

We illustrate the operation of EJR-LR-Even with the first iteration
for the election shown in example~\ref{ex2}. The selected candidate is
$c_5$. We have already seen that this candidate is initially in a
normal state. $x_{c_5}$ stores the amount of vote that can be removed
from the election while keeping $\frac{1}{\ell_{j-1}(i,c_5)}$ for
additional candidates when necessary, and its value is $120 (1 -
\frac{1}{2}) + 122= 182$.

An important difference between phargm\'en-STV and EJR-LR-Even is that
in EJR-LR-Even we only scale down the fractions of votes that are
contained in $x_{c_5}$, and therefore the fractions of vote that
remain in the election after iteration 1 are as follows: $f_i^1= f_i^0
- \frac{q}{x_{c_5}} (f_i^0 - \frac{\ell_0(i,c_5) - (|A_i \cap W| +
  1)}{\ell_0(i,c_5)})= 1 - \frac{120}{182} (1 - \frac{2 - (0 +
  1)}{2})= \frac{122}{182}$ for the voters that approve $\{c_1, c_2,
c_4\}$, and $f_i^1= f_i^0 - \frac{q}{x_{c_5}} f_i^0= 1 -
\frac{120}{182} 1 = \frac{62}{182}$ for the voters that approve
$\{c_5, c_7\}$.
\end{example}

\begin{lemma}
The number of arithmetic operations required to compute EJR-LR-Even in
the worst case is bounded by $O(nm^2k)$.
\end{lemma}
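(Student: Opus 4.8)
The plan is to recognize EJR-LR-Even as a concrete instance of the EJR-Exact family and then invoke Theorem~\ref{theo:complex}, which bounds the total number of arithmetic operations by $O(nm^2k + k(o_1+o_2) + o_3)$ once we know the worst-case costs $o_1, o_2, o_3$ of the three subalgorithms Alg1, Alg2 and Alg3. So the real work is just to identify these three subalgorithms for this particular rule and bound each of them.

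First I would pin down the subalgorithms. Alg1 is the selection step in algorithm~\ref{alg:ejr-ev2} that picks $w$ as the candidate in $C^{\mathrm{st1}}$ maximizing $s_c^j$. Since the scores $s_c^j$ are already available---they are computed for every $c \in C \setminus W$ in algorithm~\ref{alg:ejr-ev1}---this step is just a maximum over at most $m$ precomputed values, so $o_1 = O(m)$; even if one insists on recomputing the scores inside Alg1, that would cost only $O(nm)$. Alg2 is the vote-removal loop over $N_w$ in algorithm~\ref{alg:ejr-ev2}: for each of the at most $n$ voters in $N_w$ it updates $f_i^j$ by the closed-form scaling rule, each update being a constant number of operations using the already-computed $x_w$; hence $o_2 = O(n)$.

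The only step needing slightly more care is Alg3, the insufficiently supported stage in algorithm~\ref{alg:ejr-ev3}. This is a loop that runs at most $k$ times; in each pass it recomputes $s_c^j$ for the remaining candidates in $O(nm)$, takes an $\mathrm{argmax}$ in $O(m)$, and zeroes out the votes of $N_w$ in $O(n)$. Thus each pass costs $O(nm)$ and the whole stage costs $o_3 = O(nmk)$.

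Finally I would substitute these bounds into Theorem~\ref{theo:complex}, obtaining $O(nm^2k + k(o_1+o_2) + o_3) = O(nm^2k + k(m+n) + nmk)$. Since $nm^2k$ dominates each of $km$, $kn$ and $nmk$, the total simplifies to $O(nm^2k)$, as claimed. The main (and only minor) obstacle is making sure that the extra score computations peculiar to EJR-LR-Even---the $s_c^j$ values computed in algorithm~\ref{alg:ejr-ev1} during the normal-state check, and those recomputed inside Alg3---are genuinely absorbed by the $O(nm^2k)$ term, which they are, since each contributes at most $O(nmk)$ over the whole run.
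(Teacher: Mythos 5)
Your proposal is correct and takes essentially the approach the paper intends: the paper states this lemma without an explicit proof, precisely because it is meant to follow from Theorem~\ref{theo:complex} by bounding the three subalgorithms of EJR-LR-Even, which is exactly what you do ($o_1=O(m)$ or $O(nm)$, $o_2=O(n)$, $o_3=O(nmk)$, all absorbed into $O(nm^2k)$). Your extra care in checking that the $s_c^j$ computations peculiar to EJR-LR-Even are dominated by the generic $O(nm^2k)$ term is a point the paper leaves implicit, and it is handled correctly.
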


\section{Some interesting instances of EJR-Exact}

The results that we have obtained in the previous section are quite
surprising because before only the PAV rule was known to satisfy
EJR. In contrast, we can define as many rules as we want by choosing
what we do in Alg1, Alg2 and Alg3. In this section we discuss some
interesting alternatives. This section is organized in two
parts. First we discuss possible alternatives for Alg1 and Alg2 and
then we consider alternatives for Alg3.

\subsection{Alternatives for Alg1 and Alg2}

\subsubsection{Simple EJR or SEJR}

Suppose that we want to compute a committee that provides EJR for
certain ballot profile, candidate set and target committee
size. Suppose that we run some normal iterations to add some
candidates to the set of winners. After such normal iterations we find
that certain candidate has the maximum dissatisfaction level and that
such dissatisfaction level is greater than or equal to 1. Then,
corollary~\ref{cor:cor1} says that such candidate cannot be in a
starving state, lemma~\ref{lem:l2} says that she cannot be in an
inssuficiently supported state and theorem~\ref{theo:no-eager} says
that she cannot be in an eager state. Thus, she is in a normal
state. Observe that we have not needed to look to the votes that
remain in the election. That is, if we choose at each iteration the
candidate with the highest dissatisfaction level (when ties happen we
can choose any of the tied candidates) we know that there exists a way
to run normal iterations. Thus, we do not need to compute the
fractions of votes that remain in the election.

If at certain point we find that all the candidates that have not
already been added to the set of winners have a dissatisfaction level
equal to 0, it is possible that some of them are in a normal state and
others are in an insufficiently supported state. However,
lemma~\ref{lem:dl-ejr} ensures that in that situation the set of
winners will provide EJR and therefore we do not need to worry about
this.

Algorithm~\ref{alg:sejr} describes the subfamily of voting rules that
operate under these ideas. We refer to this subfamily as Simple EJR
(SEJR) because this is the simplest way that we know to compute
committees that provide EJR.

\begin{algorithm}[!htb]
\caption{The SEJR subfamily\label{alg:sejr}}
Given the algorithm {\it Alg3} that characterize each rule\\
\noindent\makebox[\linewidth]{\rule{\textwidth}{0.4pt}}
{\bf Input}: an approval-based multi-winner election $(N, C, \mathcal{A}, k)$\\
{\bf Output}: the set of winners $W$\\
\noindent\makebox[\linewidth]{\rule{\textwidth}{0.4pt}}

\algblockdefx[ForEach]{ForEach}{EndForEach}[1]
{\textbf{foreach} #1 \textbf{do}}{\textbf{end foreach}} 
\begin{algorithmic}[1]
\State $W\gets \emptyset$
\State $j\gets 0$

\ForEach{$c \in C$}
  \State $\ell_0(c)\gets \lfloor \frac{k n_c}{n} \rfloor$
\EndForEach

\State $m_{\ell}\gets \max_{c \in C \setminus W} \ell_0(c)$

\While{$j < k$ and $m_{\ell} > 0$}
  \State $w\gets \underset{c \in C \setminus W}{\mathrm{argmax}} \ \ell_j(c)$
  \State $W\gets W \cup \{w\}$
  \State $j\gets j+1$
  \ForEach{$c \in C \setminus W$}
    \State $\ell_j(c)\gets \ell_{j-1}(c)$
    \While{$\ell_j(c) > \lfloor 
      \frac{k}{n} |\{i: c \in A_i \land |A_i \cap W| < \ell_j(c)\}| 
      \rfloor$}
      \State $\ell_j(c)\gets \ell_j(c) - 1$
    \EndWhile
  \EndForEach
  \State $m_{\ell}\gets \max_{c \in C \setminus W} \ell_j(c)$
\EndWhile
\If{$|W| < k$}
  \State $j\gets |W|$
  \State Execute {\it Alg3} to add $k-j$ candidates from $C \setminus W$ to $W$
\EndIf
\State \textbf{return} $W$
\end{algorithmic}
\end{algorithm}

\begin{example}
\label{ex4}
We consider again the election described in example~\ref{ex2}. Suppose
that we compute the first winners using SEJR (that is, lines~1--18 of
algorithm~\ref{alg:sejr}). Suppose that we break ties first selecting
the most approved candidates and in the second place by lexicographic
order. The initial dissatisfaction levels of the candidates is: 2 for
$c_1, c_2, c_3, c_4, c_5$, and $c_8$; 1 for $c_6, c_7$, and $c_9$ and
0 for $c_{10}, \ldots, c_{21}$. Therefore, we have to choose first one
of $c_1, c_2, c_3, c_4, c_5$, or $c_8$. It suffices to use the first
tie-breaking rule to select candidate $c_5$. Then, the dissatisfaction
level of candidate $c_7$ falls to 0 and the dissatisfaction levels of
the other candidates do not change. Therefore, in the second iteration
we have to choose one of $c_1, c_2, c_3, c_4$, or $c_8$. It suffices
again to use the first tie-breaking rule to select candidate
$c_8$. The dissatisfaction level of candidate $c_9$ falls to 0 and the
dissatisfaction levels of the other candidates do not change. Thus, in
the third iteration we have to choose one of $c_1, c_2, c_3$, or
$c_4$. All these candidates are approved by 240 votes and therefore we
need to make use of the second tie-breaking rule to select $c_1$. The
dissatisfaction levels of $c_2$ and $c_6$ fall to 0. Finally, in the
fourth iteration $c_3$ is selected, and the dissatisfaction level of
$c_4$ falls to 0. All the candidates in $C \setminus W$ have now a
dissatisfaction level of 0, and therefore we exit the SEJR loop. We
may choose freely any 14 of the remaining candidates and the set of
winners will always provide EJR.
\end{example}

\begin{lemma}
\label{theo:complex_sejr}
Suppose that for certain instance of SEJR the number of arithmetic
operations required to execute Alg3 is bounded, by $O(o_3)$. Then,
such instance can be computed in $O(nmk + o_3)$.
\end{lemma}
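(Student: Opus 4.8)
The plan is to bound the cost of each component of Algorithm~\ref{alg:sejr} separately and sum them. The statement claims a bound of $O(nmk + o_3)$, so I must show that the SEJR-specific bookkeeping (everything except the call to \textit{Alg3}) costs at most $O(nmk)$ arithmetic operations, and then add the $O(o_3)$ for the final stage. As in the proof of Theorem~\ref{theo:complex}, I would assume the profile is stored as an $n \times m$ bit table, and that the values $|A_i \cap W|$ are maintained in an array of $n$ counters updated in $O(n)$ time each time a candidate is added to $W$.

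First I would account for the initialization. Computing each initial dissatisfaction level $\ell_0(c) = \lfloor \frac{k n_c}{n} \rfloor$ requires computing $n_c$, which costs $O(n)$ per candidate, for a total of $O(nm)$ over all candidates (lines~3--5). The key simplification relative to Theorem~\ref{theo:complex} is that SEJR never computes fractions of votes $f_i^j$: by the argument given in the discussion preceding Algorithm~\ref{alg:sejr} (invoking Corollary~\ref{cor:cor1}, Lemma~\ref{lem:l2}, and Theorem~\ref{theo:no-eager}), selecting the candidate of maximum dissatisfaction level is guaranteed to be a normal iteration whenever $m_\ell > 0$, so lines~7--20 of Algorithm~\ref{alg:ejr-lr1} are entirely absent. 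Selecting $w$ via $\mathrm{argmax}$ over $\ell_j(c)$ costs $O(m)$ per iteration of the main loop, and recomputing $m_\ell$ likewise costs $O(m)$; over at most $k$ iterations this contributes $O(mk)$.

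The main cost — and the step I expect to require the most care — is the update of the dissatisfaction levels in the nested \textbf{while} loop (lines~11--16). Here I would reuse the amortized argument from Theorem~\ref{theo:complex}: for any fixed candidate $c$, the value $\ell_j(c)$ is monotonically non-increasing across the whole execution, starts at most $k$, and never drops below $0$, so the loop body (line~15) executes at most $k$ times in total for that candidate over the entire run. Each evaluation of the loop condition on line~12 requires computing $|\{i : c \in A_i \wedge |A_i \cap W| < \ell_j(c)\}|$, which costs $O(n)$ given the stored counters. The condition is checked at most $2k$ times per candidate (once on each of the at most $k$ main-loop passes, plus once for each of the at most $k$ successful decrements), giving $O(nk)$ per candidate and $O(nmk)$ across all $m$ candidates. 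This amortized bound — rather than the naive $O(nmk^2)$ one gets from treating the inner loop as $O(k)$ per outer iteration — is what makes the stated bound achievable, so the delicate point is to argue the decrements globally rather than per-iteration.

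Finally, the counter updates cost $O(n)$ per iteration, hence $O(nk)$ overall, which is dominated. Summing the initialization $O(nm)$, the selection overhead $O(mk)$, the counter maintenance $O(nk)$, and the dominant dissatisfaction-level updates $O(nmk)$ yields $O(nmk)$ for all of Algorithm~\ref{alg:sejr} up to the \textit{Alg3} call on line~23. Adding the $O(o_3)$ cost of \textit{Alg3} gives the claimed total of $O(nmk + o_3)$.
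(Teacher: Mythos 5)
Your proposal is correct and follows essentially the same route as the paper: the paper's own proof simply invokes the amortized analysis of the dissatisfaction-level updates established at the end of the proof of Theorem~\ref{theo:complex} (total cost $O(nmk)$ over the whole execution) and concludes immediately. Your version spells out the remaining bookkeeping (initialization, argmax selection, counter maintenance) that the paper treats as immediate, but the key idea --- charging the while-loop decrements globally per candidate rather than per iteration --- is exactly the paper's argument.
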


\begin{proof}

We have seen at the end of the proof of theorem~\ref{theo:complex}
that the total cost of updating the dissatisfaction levels of the
candidates during an execution of an EJR-Exact rule is at most $O(n m
k)$. It follows immediately that the cost of the execution of an
instance of SEJR is bounded by $O(nmk + o_3)$.

\end{proof}

We note that since in SEJR we only make use of integer arithmetic it
is reasonable to assume that arithmetic operations can be done in
constant time.

\begin{corollary}
For any ballot profile, candidate set and target committee size, it is
possible to compute a set of winners that provides EJR in $O(nmk)$.
\end{corollary}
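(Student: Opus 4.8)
The plan is to exhibit a single concrete rule in the SEJR subfamily and to verify that both its running time and its EJR guarantee follow directly from results already established. The only free choice in an SEJR rule is the algorithm \emph{Alg3} that completes the committee once the main loop terminates, so I would fix \emph{Alg3} to be the most naive completion possible: add to $W$ the first $k - |W|$ candidates of $C \setminus W$ in index order. This clearly costs $O(m)$ arithmetic operations, so in the notation of Lemma~\ref{theo:complex_sejr} we have $o_3 = O(m)$.

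For the running time I would simply appeal to Lemma~\ref{theo:complex_sejr}, which bounds the cost of any SEJR instance by $O(nmk + o_3)$. Substituting $o_3 = O(m)$ gives $O(nmk + m) = O(nmk)$ arithmetic operations. I would then recall, as the paper already observes, that SEJR manipulates only the integer dissatisfaction levels $\ell_j(c)$ and the integer counters $|A_i \cap W|$ (all bounded by $\max(n,k)$) and never touches the fractions $f_i^j$; hence each arithmetic operation acts on a small integer and may be treated as constant time, so the bound on the number of operations is also a bound on actual running time.

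For correctness I would split according to how the SEJR loop terminates. If it runs until $j = k$, then every iteration selected a candidate of maximum, strictly positive dissatisfaction level; by Corollary~\ref{cor:cor1}, Lemma~\ref{lem:l2} and Theorem~\ref{theo:no-eager} such a candidate is necessarily in a normal state, so by induction the whole run consists of normal iterations of a PJR-Exact rule, and the theorem that every EJR-Exact rule satisfies EJR applies directly. If instead the loop exits with $m_\ell = 0$, then every candidate in $C \setminus W$ has dissatisfaction level $0$ at that moment. Since enlarging $W$ can only increase each $|A_i \cap W|$ and hence shrink each set $\{i : c \in A_i \land |A_i \cap W| < \ell\}$, dissatisfaction levels are non-increasing in $W$, so the candidates left unchosen after \emph{Alg3} still have dissatisfaction level $0$ with respect to the final committee. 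Lemma~\ref{lem:dl-ejr} then guarantees that this committee provides EJR, whatever \emph{Alg3} added.

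The main obstacle, and the point I would justify most carefully, is the claim that this SEJR instance is a legitimate member of the EJR-Exact family even though it never computes the fractions $f_i^j$: one must argue that the candidate of highest positive dissatisfaction level is always in a normal state, so that a valid normal iteration (a valid \emph{Alg2}) is guaranteed to exist, and that the sequence of winners — and therefore the final committee — is fully determined by the integer dissatisfaction-level dynamics alone, independently of the unrecorded fractions. Once this is granted, everything else is a routine substitution into the two cited lemmas.
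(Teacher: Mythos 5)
Your proposal is correct and follows essentially the same route as the paper: the paper's corollary is stated without an explicit proof precisely because it is meant to follow from Lemma~\ref{theo:complex_sejr} with a trivial (constant-cost) \emph{Alg3}, the remark that SEJR uses only integer arithmetic, and the SEJR correctness discussion (normal-state guarantee via Corollary~\ref{cor:cor1}, Lemma~\ref{lem:l2} and Theorem~\ref{theo:no-eager} while the maximum dissatisfaction level is positive, and Lemma~\ref{lem:dl-ejr} once it reaches zero). Your write-up merely makes explicit two points the paper leaves implicit — the monotonicity of dissatisfaction levels under enlarging $W$ (so the candidates added by \emph{Alg3} cannot break Lemma~\ref{lem:dl-ejr}), and the fact that SEJR's winner sequence is determined by the integer dynamics alone so that a valid fraction assignment exists without being computed — both of which the paper asserts in its SEJR prose.
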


\subsubsection{Minimizing Wasted Votes}

It seems reasonable to desire that Alg2 removes votes in such a way
that it tries to minimize wasted votes. Suppose that for a given
election after running $j$ normal iterations set of winners is
$W_j$. The next iteration is also normal and candidate $w$ is selected
to be added to the set of winners. Now, we have to remove $q$ of the
votes that approve $w$ from the election. Under the idea of trying to
minimize the number of wasted votes we should probably first remove
the votes from voters that have all their approved candidates in the
set of winners (that is, if for a voter $i$ that approves $w$ it is
$A_i \subseteq W_j \cup \{w\}$, then the vote of this voter should be
one of the first removed in the election.

In the second place, we believe that we should remove the votes of
voters that are already satisfied (we say that a voter $i$ is
satisfied if $|A_i \cap (W_j \cup \{w\})| \geq \ell_j(i, w)$), because
for these voters there is no need to add any other of their approved
candidates to get a set of winners that provides EJR.

\subsection{Alternatives for Alg3: EJR-Exact rules as apportionment methods}

Brill {\it et al.}~\cite{brill:dhondt} presents the following analogy
between multi-winner elections and apportionment problems: ``Any
apportionment problem can be seen as a very simple approval voting
instance: all voters approve all the candidates from their chosen
party, and only those.''. Such analogy can be used as a way to
classify multi-winner voting rules according to which party-list
proportional representation system they reduce. 

Basically, the idea is to map any party-list election to an
approval-based multi-winner election. For each list in the original
party-list election $k$ candidates ($k$ is the number of seats that
must be allocated) are created. Then, if list $A$ received $n_A$ votes
in the original party-list election, also $n_A$ voters approve only
all the candidates created for list $A$ in the approval-based
multi-winner election. For additional details we refer
to~\cite{brill:dhondt}.

The party-list proportional representation system to which a
particular instance of EJR-Exact reduces mainly depends on the
algorithm chosen for Alg3. In particular, EJR-LR-Even reduces to
largest remainders (and hence the LR in its name). In fact, any rule
in the EJR-Exact family that uses the same algorithm as EJR-LR-Even
for Alg3 reduces to largest remainders. 

Largest remainders assigns seats to each list in two steps: first, as
many seats as its lower quota; secondly, it assigns the last seats to
the lists with largest remainders after subtracting to each list total
vote as many quotas as such list has received. In the equivalent
multi-winner election the first step is equivalent to normal
iterations for any instance of EJR-Exact. Then, EJR-LR-Even in the
insufficiently supported iterations assingns seats to the candidates
with higher remaining approval votes; this is equivalent to the second
step of largest remainders.
 
Interestingly, there are also rules in EJR-Exact that reduce to
D'Hondt. Since D'Hondt satisfies also lower quota, running normal
iterations until there are no candidates left in normal state and then
using any approval-based iterative rule such that their iterations are
equivalent to D'Hondt iterations in the apportionment scenario, like
ODH~\cite{2016arXiv160905370S},
seq-phragm\'en~\cite{2016arXiv161108826J,brill:phragmen} or RAV
(surveyed by Kilgour in~\cite{kilgour10}) for Alg3 will produce an
instance that reduce to D'Hondt. Due to its simplicity, we will use
RAV to illustrate the idea.

\begin{definition}
{\bf Reweighted Approval Voting} (RAV) RAV is a multi-round
rule that in each round selects a candidate and then reweights the
approvals for the subsequent rounds. Specifically, it starts by
setting $W = \emptyset$. Then in round $j, j= 1, \ldots, k$, it
computes the {\it approval-weight} of each candidate $c$ as:

\begin{displaymath}
\sum_{i: c \in A_i} \frac{1}{1+|W \cap A_i|},
\end{displaymath}

At each iteration, the candidate with largest approval weight is added
to the set of winners.
\end{definition}

\begin{example}
We can consider using SEJR for the initial iterations and then running
RAV. We refer to this as SEJR-RAV. We use again example~\ref{ex2} to
illustrate how SEJR-RAV works. We first run SEJR as described in
example~\ref{ex4} and get $W= \{c_1, c_3, c_5, c_8\}$. In the first
RAV iteration the candidate with largest approval weight is $c_9$. Its
approval weight is $121 \frac{1}{1+|W \cap \{c_8, c_9\}|} + 65
\frac{1}{1+|W \cap \{c_9\}|}= 125,5$. The remaining candidates added
to the set of winners are $c_7$ and $c_{10}, \ldots, c_{21}$.
\end{example}

\section{Acknowledgements}

We are most grateful to Markus Brill who provided us with the pointer
to phragm\'en-STV.


\bibliographystyle{plain}
\bibliography{dhondt}

\begin{thebibliography}{10}

\bibitem{aziz:aaai15}
H.~Aziz, M.~Brill, V.~Conitzer, E.~Elkind, R.~Freeman, and T.~Walsh.
\newblock Justified representation in approval-based committee voting.
\newblock In {\em 29th {AAAI} Conference on Artificial Intelligence (AAAI)},
  pages 784--790, 2015.

\bibitem{aziz:scw}
H.~Aziz, M.~Brill, V.~Conitzer, E.~Elkind, R.~Freeman, and T.~Walsh.
\newblock Justified representation in approval-based committee voting.
\newblock {\em Social Choice and Welfare}, 48(2):461--485, 2017.

\bibitem{AGG+14a}
H.~Aziz, S.~Gaspers, J.~Gudmundsson, S.~Mackenzie, N.~Mattei, and T.~Walsh.
\newblock Computational aspects of multi-winner approval voting.
\newblock In {\em 14th International Conference on Autonomous Agents and
  Multiagent Systems ({AAMAS})}, pages 107--115, 2015.

\bibitem{brill:phragmen}
M.~Brill, R.~Freeman, S.~Janson, and M.~Lackner.
\newblock Phragm\'en's voting methods and justified representation.
\newblock In {\em 31st {AAAI} Conference on Artificial Intelligence (AAAI)},
  2017.

\bibitem{brill:dhondt}
M.~Brill, J.-F. Laslier, and P.~Skowron.
\newblock Multiwinner approval rules as apportionment methods.
\newblock In {\em 31st {AAAI} Conference on Artificial Intelligence (AAAI)},
  2017.

\bibitem{2016arXiv161108826J}
S.~{Janson}.
\newblock {Phragm\'en's and Thiele's election methods}.
\newblock {\em ArXiv e-prints}, November 2016.
\newblock arXiv:1611.08826 [math.HO].

\bibitem{kilgour10}
D.~M. Kilgour.
\newblock Approval balloting for multi-winner elections.
\newblock In J.-F. Laslier and M.~R. Sanver, editors, {\em Handbook on Approval
  Voting}, pages 105--124. Springer, 2010.

\bibitem{pjr-aaai}
L.~{S{\'a}nchez-Fern{\'a}ndez}, E.~Elkind, M.~Lackner, N.~{Fern{\'a}ndez},
  J.~A. {Fisteus}, P.~Basanta~Val, and P.~Skowron.
\newblock {Proportional Justified Representation}.
\newblock In {\em 31st {AAAI} Conference on Artificial Intelligence (AAAI)},
  2017.

\bibitem{2016arXiv160905370S}
L.~{S{\'a}nchez-Fern{\'a}ndez}, N.~{Fern{\'a}ndez}, and J.~A. {Fisteus}.
\newblock {Fully Open Extensions to the D'Hondt Method}.
\newblock {\em ArXiv e-prints}, September 2016.
\newblock arXiv:1609.05370 [cs.GT].

\bibitem{skowron2016finding}
Piotr Skowron, Piotr Faliszewski, and J{\'e}r{\^o}me Lang.
\newblock Finding a collective set of items: From proportional
  multirepresentation to group recommendation.
\newblock {\em Artificial Intelligence}, 241:191--216, 2016.

\bibitem{introSCW}
Kotaro Suzumura.
\newblock Introduction.
\newblock In Kenneth Arrow, Amartya Sen, and Kotaro Suzumura, editors, {\em
  Handbook of Social Choice and Welfare}, pages 1--32. Elsevier, 2002.

\end{thebibliography}

\end{document}